\newtheorem{theorem}{Theorem}%[section]
\newtheorem{corollary}{Corollary}
\newtheorem{lemma}{Lemma}
\newtheorem{definition}{Definition}
\newtheorem{example}{Example}
\newtheorem{remark}{Remark}
\begin{document}
\IEEEoverridecommandlockouts
\title{Spatial Network Calculus: Toward Deterministic Wireless Networking} 
% Spatial Network Calculus: A Unified Framework for Deterministic Performance Guarantees
\author{Yi Zhong, \emph{Senior Member, IEEE}, Xiaohang Zhou, Ke Feng \\
\thanks{
Yi Zhong and Xiaohang Zhou are affiliated with the School of Electronic Information and Communications at Huazhong University of Science and Technology, Wuhan, China. Ke Feng started this work at INRIA-ENS Paris, France, and is now with CNRS, Laboratoire ETIS-UMR 8051, CY Cergy Paris Université, ENSEA, Cergy-Pontoise, France. 

This research was supported by the National Natural Science Foundation of China (NSFC) grant No. 62471193, and the Wuhan Science and Technology Program (No. 2024030803010178). The work of K. Feng was supported by the ERC NEMO grant, under the European Union's Horizon 2020 research
and innovation programme, grant agreement number 788851 to INRIA.
Preliminary findings of this work were presented in WiOpt 2024. 

The corresponding author is Yi Zhong (yzhong@hust.edu.cn).}
}

% \author{
% Xiaohang Zhou, Yi Zhong, \textit{Senior Member, IEEE}, Donglin Jia, Ke Feng

% \thanks{Xiaohang Zhou, Yi Zhong, Donglin Jia are with School of Electronic Information and Communications, Huazhong University of Science and Technology, Wuhan, P. R. China. Ke Feng is with INRIA-ENS, Paris, France. 

% This research was supported by the National Natural Science Foundation of China (NSFC) grant No. 62071190.

% The corresponding author is Yi Zhong (e-mail: yzhong@hust.edu.cn). 
% }
% }

\maketitle

\begin{abstract}
This paper extends the classical network calculus to spatial scenarios, focusing on wireless networks with {differentiated services} and varying transmit power levels. Building on a spatial network calculus, a prior extension of network calculus to spatial settings, we propose a generalized framework by introducing regulations for stationary marked point processes. The regulations correspond to two key constraints: the total transmit power {of all transmitters} within a spatial region and the cumulative received power at a receiver{, which we refer to as ball regulation and shot-noise regulation, respectively}. Then we prove the equivalence of ball regulation and shot-noise regulation for stationary marked point processes and establish a universal lower bound on the performance of all network links under these constraints. This framework is applicable to diverse network scenarios, as demonstrated by the analysis of performance guarantees for networks with multi-class users. In addition, we propose an SINR-based power control scheme adapted to user traffic, which ensures differentiated quality of service (QoS) for different user classes. 
% We derive deterministic performance guarantees for all links in complex and heterogeneous wireless networks.
\end{abstract}

\begin{IEEEkeywords}
Network calculus, performance guarantees, stochastic geometry, point process, deterministic networking.
\end{IEEEkeywords}

\section{Introduction}
\subsection{Motivations}
Deterministic networking (DetNet), as a frontier in the field of data transmission, is transitioning from traditional \textit{best-effort} services to  \textit{guaranteed-performance} services paradigm \cite{8458130, 9454588}. This paradigm shift presents broad application prospects in critical areas such as smart manufacturing and vehicular networks. However, expanding the traditional (wired) deterministic networking framework to the wireless domain introduces a series of complex scientific challenges and technical bottlenecks due to the inherent characteristics of wireless communications. These challenges arise primarily from the electromagnetic properties of wireless links and the shared nature of the propagation medium. Unlike the controlled environment of wired networks, wireless propagation is vulnerable to multipath reflections and path obstructions, and the lack of physical isolation results in significant interference issues. Consequently, providing deterministic performance services in wireless network environments is particularly challenging.
In light of dynamic changes in service demands, randomness in channel conditions, unpredictability of resources, and potential network congestion, it is critical to develop new traffic management strategies, channel coding techniques, resource allocation schemes, and congestion control mechanisms. Such developments are crucial to ensure the reliability and controllable delay of data transmission.
%, especially in scenarios such as autonomous driving, industrial control, and telemedicine, where communication reliability is paramount. Therefore, research on deterministic wireless networks is not only urgent, but also essential for guaranteeing deterministic performance in dynamic and unpredictable wireless environments.

The theory of network calculus plays an indispensable role in the design of deterministic networks \cite{cruz1991calculus, 61110}. Initially developed for wired networks, 
%this theory employs precise mathematical modeling to 
network calculus introduces envelopes to describe network traffic and services, establishing deterministic upper bounds on network delay and data backlog. 
%Traditional network calculus focuses on assessing worst-case performance, providing deterministic bounds on delay, backlog, and other QoS parameters in wired networks. 
However, extending this theory to the wireless network domain presents several critical challenges. The intrinsic characteristics of wireless networks, such as signal attenuation, multipath propagation, and dynamic traffic variations, substantially increase the complexity of guaranteeing deterministic performance. Consequently, traditional network calculus methods are challenging to apply directly to wireless networks.

To address these challenges, 
%prior research introduced 
a \textit{spatial network calculus} is introduced in \cite{feng2023spatial}, which extends the classical network calculus by incorporating the spatial dimension along with the temporal dimension. This advancement offers a novel approach to providing deterministic performance guarantees on all links in wireless networks. The core concept involves using stochastic geometry \cite{haenggi2012stochastic, 6042301,BLASZCZYSZYN2021153} to analyze each wireless link in {spatially regulated} large-scale networks (spatial dimension) and integrating this with classical network calculus methods that focus on individual data packets (temporal dimension). 
{By applying spatial regulations, the number of simultaneously active transmitters is controlled, which is necessary to enable deterministic performance guarantees across all links in wireless networks.}
% This results in a new set of spatial regulations and methods to ensure deterministic performance in wireless networks. 
% This step shows significant potential for deterministic performance guarantees in wireless networks, particularly in scenarios requiring high reliability and real-time performance, such as autonomous driving, industrial control, and telemedicine.

{The spatial regulations introduced in \cite{feng2023spatial} focuses on constraining the number of transmitters for homogeneous networks. It does not capture the diversity of transmit power, traffic, and QoS demands in wireless networks with differentiated services \cite{10746553}.}
%, such as wireless power control, particularly in spatially correlated settings. 
%With the expansion of mobile applications, network nodes and traffic have become increasingly diversified , driving the demand for different quality of services (QoSs). 
For example, an approach is needed that tailors for both high-demand and low-demand users. Although next-generation mobile communication systems such as 6G have made strides in energy efficiency, ultra-reliability, and low latency, ensuring service guarantees for all users in large-scale heterogeneous wireless networks remains a critical and complex challenge \cite{hwang2022study, zhong2024toward}. 

% This paper focuses on addressing the diverse signal-to-interference-plus-noise ratio (SINR) requirements of different user classes in such heterogeneous environments.

This paper broadens the scope of the spatial network calculus in \cite{feng2023spatial} to provide performance guarantees in wireless networks with different power levels and types of traffic. Our approach provides deterministic performance guarantees while accommodating diverse QoS requirements. We offer both theoretical insights, enriching the spatial network calculus framework, and practical strategies for reliable network performance in diverse wireless environments. To meet the demands of complex traffic and limited resources, we propose and rigorously evaluate customized power control strategies, highlighting the importance of spatial regulation and strategic resource allocation to ensure QoS for all links.

{The proposed spatial network calculus applies to a wide range of real-world deployments with dense and heterogeneous wireless traffic. For instance, in factory automation or smart intersections, nodes are often clustered in confined areas, yet require reliable and low-latency communications. As long as active transmitters satisfy the proposed spatial regulations, our framework guarantees a lower bound on the link success probability for all links, ensuring predictable performance even in densely populated wireless environments.}

\subsection{Related Works}
% 现有工作总结
Using an appropriate point process is critical for accurately modeling wireless networks. {The poisson point process (PPP) model is commonly used to depict the spatial locations of network nodes \cite{6042301, zhong2020spatio}.
When the transmitters are modeled as a PPP and when they implement an access control protocol like Carrier Sense Multiple Access (CSMA), the effective transmitters are often modeled as a Matérn hardcore point process (MHCPP) \cite{nguyen2007stochastic, net:Haenggi11cl}.} In \cite{nguyen2013performance}, an extension of the MHCPP is used, where the competition is based on pair-wise SIR protection zones with the modification to the RTS/CTS mechanism which defines a receiver-centered protection zone to restrict other transmitters' activity within the area. Mishra et al. \cite{mishra2020stochastic} employ stochastic geometry to realistically model automotive radar interference, incorporating the Matérn point process to account for the spatial mutual exclusion of vehicles. {Di Renzo et al. \cite{di2018system} use the MHCPP for modeling base station locations within cellular networks. Chen {and Zhong} 
\cite{chen2024characterizing} analyze the stable packet arrival rate region in discrete-time slotted random access network by arranging the transmitters as an MHCPP.}

Most existing literature focuses on improving the performance of wireless networks including link success probability, rate, and latency. Miao {et al.} \cite{miao2012joint} propose a joint power control algorithm with QoS guarantee using the convex optimization method that satisfies the minimum rate constraint of the delay-constraint service and the proportional fairness of the best-effort service. Liu's research \cite{liu2011utility} proposes an algorithm that not only enhances the utility for best-effort  users but also diminishes the outage probability for rate-constrained users. 
In scenarios where resource availability is constrained, the allocation of priority becomes a critical consideration. Choi \cite{choi2022modeling} suggests a mechanism where vehicles engage in transmission based on the highest requirement among their neighboring vehicles. Kumar and colleagues \cite{kumar2018priority} introduce a method for assigning terminal requirements, factoring in both the ongoing service and the terminal location. {Abdel-Hadi and Clancy \cite{abdel2014utility} focus} on finding the optimal solution for the resource allocation problem that includes users with requirements. Lee's work \cite{lee2004opportunistic} delves into the realm of optimal power control, prioritizing services and thereby directing power control in alignment with these requirements. In \cite{kumar2018priority, abdel2014utility, lee2004opportunistic}, the authors focus on the multi-priority mechanisms and power resource allocation strategies in communication networks, which improve performance and fairness to terminal users.

However, the link performance of all may not be deterministically guaranteed previously due to uncontrolled transmitter locations. By contrast, we consider spatially regulated point processes, which is inspired by the deterministic network calculus framework \cite{chang2001performance,le2001network,cruz1991calculus}. The seminal work of Cruz in classical network calculus introduces the $(\sigma, \rho)$ traffic regulation, where $\sigma$ represents an allowable level of initial burstiness, {and $\rho$ sets the upper limit for the data arrival rate}. Thus, the traffic flow is bounded by a baseline constant $\sigma$ and an additional rate $\rho$ scales the duration of the time interval \cite{cruz1991calculus}. {Chang and Le Boudec have summarized advances in the study of queue lengths, packet loss rates, and traffic characteristics in the packet queuing process within deterministic network calculus \cite{chang2012performance, le2001network}.}
As a significant derivative branch of network calculus, stochastic network calculus provides precise quantitative methods for the statistical limits of network performance, marking a new milestone in the development of the theory \cite{6868978}. In their comprehensive review of stochastic network calculus, {Jiang et al. \cite{jiang2008stochastic}} introduce two key concepts: the stochastic arrival curve and the stochastic service curve. {These concepts have been fundamental in the study of service guarantee theories and have laid the theoretical groundwork for ensuring performance guarantees (stochastic or deterministic).} {Adamuz-Hinojosa et al. \cite{adamuz2022stochastic} build upon the stochastic network calculus framework to provide an upper bound on packet transmission delay for uRLLC RAN slices in a single cell.} In recent years, academic research on network calculus theory has also made significant advancements \cite{7033634, 10382447, 10412651, 10457574}. {Unlike stochastic network calculus\cite{10220199} and classical network calculus\cite{9796648}, which analyze network performance in the temporal domain, spatial network calculus provides deterministic performance guarantees by applying spatial regulations.}

Expanding these principles into spatial contexts, \cite{feng2023spatial} applies the parameters $\sigma$, $\rho$, and an additional parameter $\nu$ to govern the number of wireless links, the cumulative received power, and the interference within large-scale wireless networks. This methodology leads to the development of $(\sigma, \rho, \nu)$-ball regulation and $(\sigma, \rho, \nu)$-shot-noise regulation in spatial network calculus. {
%A preliminary version of this work appeared in \cite{wiopt24}, which focuses on these regulations for multi-priority traffic.
These regulations are applied to analyze wireless networks with multi-priority traffic in \cite{wiopt24}, which is a preliminary version of this work. In contrast, the present paper generalizes the earlier work through a novel framework based on stationary marked point processes and extends the regulations to ensure service guarantees for diverse network scenarios. Furthermore, we propose a new power control scheme tailored to manage power allocation in a bipolar network architecture \cite{baccelli2010stochastic}, where transmitters' spatial distribution follows a hardcore point process and user or traffic requirements differentiate across the network. % Finally, this paper presents substantial new theoretical results, detailed proofs, and refined analytical techniques, contributing to greater theoretical depth and academic rigor.
} 
% In this work, we extend these regulations to provide a methodology to ensure service guarantees for diverse and heterogeneous network scenarios. Furthermore, our research proposes a power control scheme. This scheme is specifically designed to manage the allocation of power in a bipolar network architecture \cite{baccelli2010stochastic}, where the transmitters follows a hardcore process, and takes into account the presence of users or traffic of varying requirements. 

% {A preliminary version of this work was presented in \cite{wiopt24}, where we investigated service guarantees for multi-priority traffic based on the spatial network calculus framework. In contrast to \cite{wiopt24}, this paper extends and generalizes the earlier work through a novel framework based on marked point processes. The regulation concepts have also been reformulated to broaden the applicability beyond the specific cases considered previously. Moreover, this paper introduces substantial new theoretical results, detailed proofs, and refined analytical techniques, offering greater theoretical depth and academic rigor.}

\subsection{Contributions} 
Our work extends the classical network calculus framework to spatial domains with generalized power and interference constraints, enhancing its applicability to practical and complex wireless network scenarios. The key contributions of this paper are as follows.

\begin{itemize} \item 
We extend two regulatory constraints {in the spatial network calculus}, i.e., ball regulation and shot-noise regulation, to stationary marked point processes and prove their equivalence in this context. These regulations provide a unified theoretical foundation for deriving performance lower bounds applicable to all network links in complex and heterogeneous wireless networks.
% \item We develop a generalized framework that provides deterministic guarantees on network performance, including lower bounds for link success probability under diverse user and traffic demands.
\item {We derive new lower bounds for link success probability under differentiated user and traffic demands.}
% \item To demonstrate the practical utility of the proposed framework, we analyze a traffic-adaptive power control strategy tailored to heterogeneous traffic conditions. Although this strategy serves as an illustrative example, it highlights how our results can guide power control to meet varying QoS requirements.
\item {To demonstrate the practical utility of the proposed framework, we analyze a traffic-adaptive power control strategy tailored to traffic demands. It highlights how our results can guide power control to meet varying QoS requirements.}
\end{itemize}

This paper is organized as follows. Section II introduces spatial regulations to govern wireless network configurations. Section III presents a heterogeneous traffic example and derives link success probability lower bounds for all links based on these regulations. Numerical results are discussed in Section IV, followed by conclusions in Section V.

% \section{System Model}
\section{Spatial Regulations}
For point processes without spatial regulations, such as the PPP, obtaining a lower bound for link success probability or guaranteeing deterministic performance for all links is not feasible due to unbounded interference from nearby transmitters.  
{In contrast, by applying spatial regulations, such as thinning a PPP into a hardcore point process, we can establish upper bounds for both the total transmitter power and the total received power within any area, enabling a link success probability lower bound across the network. In particular, introducing spatial exclusion between nodes serves as a special form of spatial regulation.} In this section, we will define ball regulation and shot-noise regulation for stationary marked point processes and explain their physical significance.

\subsection{Definitions}
To enhance a guaranteed lower bound of network performance, we introduce regulations on the transmit power. We first define the strong version of these regulations, which must be universally applicable across the entire space. {Then we introduce the weak version, which requires that the regulations are satisfied only from the observation of a jointly stationary point process.}

\begin{table}[ht]
\centering
\caption{{Summary of Notations}}
\label{tab:notations}
{
\begin{tabular}{c||l}
\hline
\textbf{Symbol} & \textbf{Definition} \\
\hline
$\widetilde{\Phi}$ & Stationary marked point process \\
$\Phi$ & Set of transmitter locations  \\
$\Psi$ & Jointly stationary point process of  $\widetilde{\Phi}$ \\
$\sigma,\,\rho,\,\nu$ & Non-negative spatial regulation parameter \\
$P_{x}$ & Transmit power of transmitter $x\in\Phi$ \\
$b(o, r)$ & The open ball centered at $o$ with radius $r$  \\
$P_{\rm total}(o, r)$ & Total transmit power of all transmitters within $b(o, r)$ \\
$N(r)$ & Total number of transmitters within $b(o, r)$ \\
$H$ & Hardcore distance \\
$\ell(\cdot)$ & Path loss function \\
$\alpha$ & Path loss exponent \\
$A_{\ell}$ & The upper bound for the total contribution of $\ell$ \\
$h_{x}$ & Small-scale fading coefficient from $x$ to the receiver \\
$\gamma_0$ & Link success probability lower bound \\
$M$ & Total number of user classes in the network \\
$P_\mathrm{s,min}$ & Minimum link success probability \\
${}_2F_1(\cdot)$ & Gaussian hypergeometric function \\
$\beta$ & Power allocation parameter \\
$r_{0}$ & Fixed distance from the transmitter to the receiver \\
\hline
\end{tabular}}
\end{table}

Consider a stationary marked point process \(\widetilde{\Phi} = \{(x, P_{x})\}\) on \(\mathbb{R}^2 \times \mathbb{R}^+\), defined on the probability space \((\Omega, \mathcal{A}, \mathbb{P})\). Here, \(\Phi  \triangleq  \{x : (x, P_{x}) \in \widetilde{\Phi}\}\) represents the set of transmitter locations and \(P_{x} > 0\) denotes the transmit power of the transmitter \(x\in\Phi\).\footnote{It is worth noting that marks can represent properties other than the transmit power, in which case the  results presented in this work remain valid.}
Let \( P_{\rm total}(y, r) \triangleq \sum_{x \in \Phi \cap b(y,r)} P_{x} \) denote the total power allocated to all transmitters within \(b(y,r)\), the open ball of radius \( r \) centered at \(y\in\mathbb{R}^2\). 
% If \(y\) is arbitrarily chosen in space, we obtain the strong regulations; if \(y\) is selected based on another jointly stationary point process, we obtain the weak regulations. 
In this work, we use the term ``a.s.'' (almost surely) to indicate that an event occurs with probability 1. {Table \ref{tab:notations}
summarizes the key notations defined in this paper.}

With these notations in place, we can now define the strong and weak $(\sigma, \rho, \nu)$-ball regulations for the stationary marked point process as follows.

\begin{definition}
\label{power-reg1}
(Strong $(\sigma, \rho, \nu)$-ball regulation). A stationary marked point process \(\widetilde{\Phi}\)
% $\Phi \triangleq \{x_i\}$ 
is said to exhibit strong $(\sigma, \rho, \nu)$-ball regulation if, for all \( r \geq 0 \),
\begin{equation}
    P_{\rm total}(o, r) \leq \sigma + \rho r + \nu r^2, \quad \mathbb{P}\text{-a.s.},
    \label{eq: def-1}
\end{equation}
where 
%\( P_{\rm total}(o, r) = \sum_{x \in \Phi \cap B(o,r)} P_{x} \) is the total power allocated to all transmitters within a circular area of radius \( r \)  centered at the origin denoted by \(B(o,r)\), and
\( \sigma, \rho, \nu \) are non-negative constants.
\end{definition}

Alternatively, one can write (\ref{eq: def-1}) as \(\mathbb{P}(P_{\rm total}(o, r) \leq \sigma + \rho r + \nu r^2) = 1\). Due to the stationarity of the marked point process, the probability distribution of \(P_{\rm total}(y, r)\) is identical to that of \(P_{\rm total}(o, r)\) for all \(y \in \mathbb{R}^2\) and \(r > 0\). {If \(y\) is arbitrarily chosen in space, we obtain the strong regulations; if \(y\) is selected based on another jointly stationary point process, we obtain the weak regulations.}

\begin{definition}\label{weak-power-reg1}
(Weak $(\sigma, \rho, \nu)$-ball regulation). A stationary marked point process \(\widetilde{\Phi}\) exhibits weak $(\sigma, \rho, \nu)$-ball regulation with respect to a jointly stationary point process \(\Psi\) if, for all \( r \geq 0 \),
\begin{equation}
    P_{\rm total}(o, r) \leq \sigma + \rho r + \nu r^2, \quad \mathbb{P}_{\Psi}^{o}\text{-a.s.},
\end{equation}
where 
% \( P_{\rm total}(o,r) = \sum_{x \in \Phi \cap B(o,r)} P_{x} \) represents the total power allocated to transmitters within a circular area of radius \( r \), centered a given point \(o\in\Psi\), and 
\( \sigma, \rho, \nu \) are non-negative constants, and $\mathbb{P}_{\Psi}^{o}$ denotes the Palm probability of $\Psi$.
\end{definition}

{Notably, if $\Phi$ is independent of $\Psi$, then we get strong regulations defined earlier.}

\begin{remark}
In Definitions \ref{power-reg1} and \ref{weak-power-reg1}, \(\sigma\) represents the maximum power level for all transmitters. \(\rho\) scales linearly with the radius \(r\) of the region, reflecting the contribution to power from the periphery. Meanwhile, \(\nu\) scales quadratically with \(r\), accounting for the increase in total power proportional to the size of the ball, which is also influenced by the repulsion between transmitters.
\end{remark}

\begin{remark}
When the transmit power of all transmitters in \(\Phi\) is equal and normalized to 1, i.e., \(P_{x} \equiv 1\) for all \(x\in\Phi\), we have \(P_{\rm total}(o,r) = \sum_{x\in \Phi \cap b(o,r)} 1 = \Phi(b(o,r))\), where $ \Phi(b(o,r))$ denotes the number of points in \(\Phi\) within \(b(o,r)\). {Eq (\ref{eq: def-1}) becomes}
% This leads to the inequality:
\begin{equation}
    \Phi(b(o,r)) \leq \sigma + \rho r + \nu r^2, \quad \mathbb{P}\text{-a.s.}
\end{equation}

In this case, the ball regulation for a stationary marked point process is consistent with the definition of ball regulation introduced in \cite{feng2023spatial}, which governs the total number of points in a point process within a specified circular region. {It is worth noting that this upper bound is not achievable in general.}
\end{remark}

We derive the superposition property of the proposed ball regulations.
\begin{lemma}
\label{lem:superposition}

Consider a stationary marked point process \(\widetilde{\Phi}\) constructed as the superposition of \(M\) jointly stationary marked subprocesses \(\widetilde{\Phi}_i\), such that \(\widetilde{\Phi} = \bigcup_{i=1}^M \widetilde{\Phi}_i\). If each subprocess \(\widetilde{\Phi}_i\) is strongly (or weakly) \((\sigma_i, \rho_i, \nu_i)\)-ball regulated, then the superposed marked point process \(\widetilde{\Phi}\) is also strongly (or weakly) \(\left({\sigma}, {\rho}, {\nu}\right)\)-ball regulated, where 
\begin{equation}
{\sigma} \triangleq \sum_{i=1}^{M} \sigma_i, \quad {\rho} \triangleq \sum_{i=1}^{M} \rho_i, \quad \text{and} \quad {\nu} \triangleq \sum_{i=1}^{M} \nu_i.
\end{equation}
\end{lemma}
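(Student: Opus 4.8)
The plan is to reduce the superposition bound to two ingredients: the additivity of total power across subprocesses, and the stability of an almost-sure event under a finite intersection. First I would observe that, since $\widetilde{\Phi} = \bigcup_{i=1}^M \widetilde{\Phi}_i$, the atoms of $\widetilde{\Phi}$ lying in any ball $B(o,r)$ are exactly the union of the atoms contributed by the individual $\widetilde{\Phi}_i$, so the total power is additive:
\begin{equation}
P_{\rm total}(o,r) = \sum_{i=1}^M P_{\rm total}^{(i)}(o,r),
\end{equation}
where $P_{\rm total}^{(i)}(o,r) \triangleq \sum_{x \in \Phi_i \cap B(o,r)} P_x$ is the contribution of subprocess $i$, with $\Phi_i = \{x : (x,P_x) \in \widetilde{\Phi}_i\}$. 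This is the step I would treat most carefully: I must confirm that superposition combines the marked processes as a disjoint union of atoms and does not double-count power, so that summing the per-subprocess powers genuinely recovers the power of the aggregate.

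Next, for the strong case, each hypothesis supplies an event $A_i \triangleq \{P_{\rm total}^{(i)}(o,r) \leq \sigma_i + \rho_i r + \nu_i r^2 \text{ for all } r \geq 0\}$ with $\mathbb{P}(A_i) = 1$. Because $M$ is finite, the intersection $A \triangleq \bigcap_{i=1}^M A_i$ still satisfies $\mathbb{P}(A) = 1$, since a finite intersection of probability-one events has probability one. The finiteness of $M$ is the only place the number of subprocesses intervenes, and it is precisely what guarantees that the aggregated constants $\sigma$, $\rho$, $\nu$ remain finite.

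On the event $A$ I would then simply add the per-subprocess envelopes. For every $r \geq 0$,
\begin{equation}
P_{\rm total}(o,r) = \sum_{i=1}^M P_{\rm total}^{(i)}(o,r) \leq \sum_{i=1}^M \left( \sigma_i + \rho_i r + \nu_i r^2 \right) = \sigma + \rho r + \nu r^2,
\end{equation}
which is exactly the strong $(\sigma,\rho,\nu)$-ball regulation for $\widetilde{\Phi}$. The weak case is structurally identical: I would replace $\mathbb{P}$ by the Palm probability $\mathbb{P}_\Psi^o$ of the reference process $\Psi$, invoking the joint stationarity of the $\widetilde{\Phi}_i$ and $\Psi$ so that $\mathbb{P}_\Psi^o$ is well defined and each $A_i$ carries full Palm measure; the finite-intersection and term-by-term summation steps then transfer verbatim.

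I expect the argument to be essentially routine, with no deep obstacle. The only points demanding genuine attention are the additivity of $P_{\rm total}$ under superposition and the implicit restriction to finitely many subprocesses, which together keep both the almost-sure intersection and the summed constants under control; an infinite superposition would require the aggregated constants to converge and is outside the present scope.
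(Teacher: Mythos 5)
Your proposal is correct and follows essentially the same route as the paper's proof: decompose $P_{\rm total}(o,r)$ into the per-subprocess contributions, apply each subprocess's envelope, and sum. The only difference is that you make explicit the finite intersection of probability-one events (and the Palm-measure analogue in the weak case), a measure-theoretic step the paper leaves implicit.
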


\begin{proof}
% delete Let $\Phi_i(B(o,r))$ represent the number of transmitters from the \(i\)-th subprocess within \( B(o,r) \).

Denote by \( P_{{\rm total},i}(o,r) \) the total power of the transmitters of the \(i\)-th subprocess within $b(o,r)$. Since each subprocess \(\widetilde{\Phi}_i\) is strongly (or weakly) \((\sigma_i, \rho_i, \nu_i)\)-ball regulated, 
% {we obtain from Definition \ref{power-reg1} that}
there exists an upper bound on the total power for all transmitters in the \(i\)-th subprocess within \( b(o,r) \), i.e.,
\begin{equation}
P_{{\rm total},i}(o,r) \leq \sigma_i + \rho_i r + \nu_i r^2, \quad \mathbb{P}\text{-a.s.} \label{eqn:P_total_i1}
\end{equation}

{
% Considering the stationary marked point process \(\widetilde{\Phi}\), 
Let \( P_{\rm total}(o,r) \) denote the total power of transmitters across all subprocesses within $b(o,r)$, i.e.,}
\begin{equation}
    P_{\rm total}(o,r) \triangleq \sum_{i=1}^{M} P_{{\rm total},i}(o,r). \label{eqn:P_total2}
\end{equation}
% $P_{\rm total}(o,r) \triangleq \sum_{i=1}^{M} P_{{\rm total},i}(o,r). \label{eqn:P_total2}$

By substituting (\ref{eqn:P_total_i1}) into {(\ref{eqn:P_total2})}, we obtain
\begin{equation}
P_{\rm total}(o,r) \leq {\sigma} +{\rho} r + {\nu} r^2, \quad \mathbb{P}\text{-a.s.}
\end{equation}
%where \(\tilde{\sigma} \triangleq \sum_{i=1}^{M} \sigma_i\), \(\tilde{\rho} \triangleq \sum_{i=1}^{M} \rho_i\), and \(\tilde{\nu} \triangleq \sum_{i=1}^{M} \nu_i\).

%Thus, from the definition of strong (or weakly) ball regulation, we conclude that the marked point process \(\widetilde{\Phi}\) is strongly (or weakly) \(\left(\tilde{\sigma}, \tilde{\rho}, \tilde{\nu}\right)\)-ball regulated.
\end{proof}

{Using the definition of ball regulation, we derive the following lemma to characterize its properties in the context of hardcore point processes (HCPPs) marked with a constant transmit power. An HCPP is a spatial point process that enforces mutual exclusion among points by ensuring that no two transmitters are located within a minimum distance  \( H \) of each other, thus effectively modeling interference-avoiding deployments in wireless networks. An MHCPP is a widely used example of an HCPP and comes in two variants, type I and type II. In a type I MHCPP, all transmitters are initially distributed according to a PPP, and any point that falls within a hardcore distance \( H \) of another point is simply removed, regardless of any additional attributes. In contrast, a type II MHCPP introduces a random timestamp for each transmitter in a PPP. When two or more transmitters fall within the exclusion zone of one another, the one with the earliest timestamp is retained, while the others are discarded. This timestamp-based selection allows the type II MHCPPs to achieve a higher density of active transmitters compared to the type I, making them particularly suitable for modeling CSMA protocols where temporal coordination is used to avoid collisions and minimize interference.}

\begin{lemma}
\label{lem:hardcore}
An HCPP on \(\mathbb{R}^2\) with a hardcore distance \(H\), marked by a constant transmit power \(P\), is strongly \(\left(P, \frac{2\pi P}{\sqrt{12}H}, \frac{\pi P}{\sqrt{12}H^2}\right)\)-ball regulated.
\end{lemma}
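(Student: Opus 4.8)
The plan is to reduce the power bound to a point-counting bound and then control the count by a circle-packing argument. Since every mark equals the constant \(P\), the second remark above gives \(P_{\rm total}(o,r) = P\,\Phi(B(o,r))\), so after dividing through by \(P\) the claimed regulation is equivalent to the purely deterministic geometric estimate
\[
\Phi(B(o,r)) \le 1 + \frac{2\pi}{\sqrt{12}\,H}\,r + \frac{\pi}{\sqrt{12}\,H^2}\,r^2, \qquad r \ge 0 .
\]
The only probabilistic input I would use is that, almost surely, a type-II MHCPP realization has all pairwise distances at least \(H\); everything else is a worst-case bound valid for \emph{every} such configuration, which is exactly what an a.s.\ statement over all \(r\) requires.

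First I would exploit the hardcore property by associating to each retained point \(x \in \Phi \cap B(o,r)\) an exclusion disk, chosen so that these disks are pairwise non-overlapping and are all contained in the concentric ball \(B(o, r+H)\) enlarged by the hardcore distance. A crude summation of areas already bounds \(\Phi(B(o,r))\) by the area of the enlarged ball divided by the area of a single disk; the task is to make this count sharp.

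The key step is to replace the crude area comparison by the optimal density of circle packings in the plane, \(\Delta = \pi/\sqrt{12}\) (Thue's theorem, attained by the hexagonal packing). Inserting \(\Delta\) into the area inequality and expanding \((r+H)^2 = r^2 + 2Hr + H^2\) produces a quadratic in \(r\) whose \(r^2\)- and \(r\)-coefficients are exactly \(\tfrac{\pi}{\sqrt{12}\,H^2}\) and \(\tfrac{2\pi}{\sqrt{12}\,H}\); the leftover constant \(\pi/\sqrt{12}\) is at most \(1\), so it is absorbed into \(\sigma = P\). Multiplying back by \(P\) then yields the stated triple \(\left(P, \frac{2\pi P}{\sqrt{12}\,H}, \frac{\pi P}{\sqrt{12}\,H^2}\right)\).

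The step I expect to be the main obstacle is making the packing-density argument rigorous on a \emph{bounded} region: \(\pi/\sqrt{12}\) is the asymptotic (supremal) density of an infinite packing, so a clean finite-ball version must account for boundary effects, which is precisely why one compares against the enlarged ball \(B(o,r+H)\) rather than \(B(o,r)\) itself and verifies that the finitely many exclusion disks never exceed density \(\Delta\). Getting the disk radius and the enlargement to bookkeep correctly—so that the densest-packing constant \(\pi/\sqrt{12}\) appears rather than the looser bound obtained by ignoring \(\Delta\)—is where the care lies; by comparison, the reduction to point counting and the almost-sure minimum-distance property are routine.
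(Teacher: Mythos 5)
Your plan is the same as the paper's proof: compare the total area of disjoint exclusion disks around the retained points against the area of the enlarged ball, scaled by the optimal planar packing density $\pi/\sqrt{12}$, then expand the square and absorb the leftover constant $\pi/\sqrt{12}\le 1$ into $\sigma=P$. The paper's entire argument is the single inequality $\pi H^2 N(r)\le \pi^2(r+H)^2/\sqrt{12}$ followed by exactly this expansion, so your reconstruction matches it step for step, including the reduction to point counting and the use of $B(o,r+H)$ to handle boundary effects.

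The one step you flag as ``where the care lies'' is, however, left unresolved, and it is precisely where the arithmetic is fragile. If the only input is that pairwise distances are at least $H$ (which is how the paper defines the hardcore distance, and how you state it), then the pairwise-disjoint exclusion disks have radius $H/2$, area $\pi H^2/4$, and are contained in $B(o,r+H/2)$; the packing bound then gives
\begin{equation*}
N(r)\;\le\;\frac{\pi}{\sqrt{12}}\left(\frac{2r}{H}+1\right)^{2}
\;=\;\frac{\pi}{\sqrt{12}}+\frac{4\pi}{\sqrt{12}\,H}\,r+\frac{4\pi}{\sqrt{12}\,H^{2}}\,r^{2},
\end{equation*}
i.e.\ $\rho$ and $\nu$ twice and four times the stated values. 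To obtain disjoint disks of area $\pi H^{2}$ --- which is what your coefficient computation, and the paper's displayed inequality, actually require --- one must read $H$ as the radius of a guard zone, so that the minimum pairwise separation is $2H$ (this is the convention suggested by the paper's Fig.~1 caption, but it contradicts the definition given in the text). So as written your premise and your claimed coefficients are incompatible: you need to either adopt the guard-zone convention explicitly or accept the larger constants. This is a defect your proposal shares with the paper's own proof rather than a divergence from it.
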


\begin{proof}
Consider the densest packing scenario of an HCPP on \(\mathbb{R}^2\) (see Fig. \ref{fig-densestPacking}), where \(N(r)\triangleq \Phi(b(o,r))\) denotes the total number of transmitters within a radius region \(r\), and \(H\) represents the hardcore distance. The hardcore condition enforces a minimum distance \(H\) between any two transmitters, with the theoretical maximum density of a hardcore packing in \(\mathbb{R}^2\) being \(\frac{\pi}{\sqrt{12}}\).

The upper bound of \(N(r)\) for \(r \geq H\) is given by the area of a circle with radius \(r+H\), divided by the area required per transmitter:
\begin{equation}
\pi H^2 N(r) \leq \frac{\pi^2 (r + H)^2}{\sqrt{12}}.
\end{equation}

Expanding and rearranging this expression, and applying \(P_{\rm total}(o,r) = N(r)P\), we get:
\begin{equation}
    P_{\rm total}(o,r) \leq P + \frac{2\pi P}{\sqrt{12}H}r + \frac{\pi P}{\sqrt{12}H^2} r^2. \label{eqn:hardcorebound}
\end{equation}

Due to the stationarity, we obtain the result in the lemma.
\end{proof}

\begin{figure}
\centering
\includegraphics[width=0.4\textwidth]{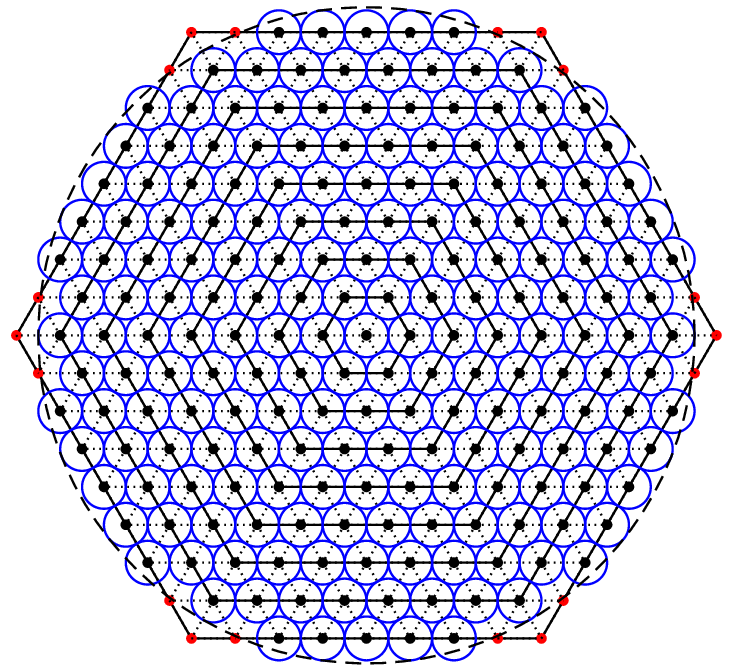}
\caption{Diagram of a triangular lattice with eight hexagonal rings of nodes in wireless network. Solid dots represent transmitters, blue circles represent guard zones with a radius of the hardcore distance $H$, and all retained transmitters are located within a dashed circle with radius $r$. Ring $k$ contains $6k$ nodes {for $k \geq 1$}\cite{haenggi2009interference}.}
\label{fig-densestPacking}
\end{figure}

% To prevent significant interference caused by assigning users in close proximity to the same $i$-th class, we could initially categorize users through independent and identically distributed sampling and divide areas to assess the proportion of users in the $i$-th class. If the proportion exceeds $p_i$, then the requirement of user is reduced. Conversely, if the proportion is below $p_i$, user previously of a lower requirement is elevated.

% \begin{lemma}
% Let $p_i$ be the proportion of the $i$-th class of users, any hardcore point process on $\mathbb{R}^2$ with hardcore distance $H$ is strongly $\left(p_i, 2\pi p_i/(\sqrt{12}H), \pi p_i/(\sqrt{12}H^2)\right)$-ball regulated.
% \end{lemma}
% \begin{proof}
% Considering the densest packing scenario, let $N(r)$ be the total number of transmitters for all user classes within a region of radius $r$, $H$ denote hardcore distance and $p_i$ indicates the proportion of the $i$-th class users. For cases where $r\geq H$ and the theoretical maximum density is $\pi/\sqrt{12}$, we have
% \begin{equation}\notag
%     \pi H^{2}N(r)\leq\pi^{2}(r+H)^{2}/\sqrt{12}
% \end{equation}
% where the upper bound of $N(r)$ can be simplified to $1+2\pi r/(\sqrt{12}H)+\pi r^2/(\sqrt{12}H^2)$, thus the hardcore point process is strongly $\left(p_i, 2\pi p_i/(\sqrt{12}H), \pi p_i/(\sqrt{12}H^2)\right)$-ball regulated.
% \end{proof}

\begin{remark}
Lemma \ref{lem:hardcore} states that an HCPP with a constant transmit power mark \(P\) is strongly ball regulated, with the upper bound specified in (\ref{eqn:hardcorebound}) holding for any circular area of radius \( r \) centered at an arbitrarily chosen point in space. If these centers are selected based on a jointly stationary point process \(\Psi\), the condition in (\ref{eqn:hardcorebound}) remains valid. Consequently, an HCPP with a hardcore distance \( H \) and constant transmit power mark \(P\) is also weakly $\left(P, \frac{2\pi P}{\sqrt{12}H}, \frac{\pi P}{\sqrt{12}H^2}\right)$-ball regulated with respect to any stationary point process.
\end{remark}

In addition to regulating total power within a circular region, we also impose regulations on the power of shot noise.

\begin{definition}\label{shot-noise-reg}
(Strong $(\sigma, \rho, \nu)$-shot-noise regulation). A stationary marked point process $\widetilde{\Phi}$ is strongly $(\sigma, \rho, \nu)$-shot-noise regulated if, for all non-negative, bounded, and non-increasing functions $\ell: \mathbb{R}^+\rightarrow \mathbb{R}^+$ and for all $R > 0$,
\begin{equation}\label{shot-noise-eq}
\begin{aligned}
    &\sum_{x\in\Phi\cap b(o,R)}P_{x}\ell(\|x\|)\leq \sigma\ell(0) \\
&\qquad+\rho\int_0^R\ell(r) \, {\rm d}r+2\nu\int_0^Rr\ell(r) \, {\rm d}r,\quad\mathbb{P}\text{-a.s.},
\end{aligned}
\end{equation}
where \(\|x\|\) denotes the Euclidean norm of the vector \(x\).
\end{definition}

By the stationarity of the marked point process \(\widetilde{\Phi}\), (\ref{shot-noise-eq}) holds for any \( y \in \mathbb{R}^2 \) as:
\begin{equation}
\begin{aligned}
    \sum_{x \in \Phi \cap b(y, R)} P_{x} \ell(\|x - y\|) \leq \sigma \ell(0) + \rho \int_0^R \ell(r) \, {\rm d}r \\
    \quad + 2\nu \int_0^R r \ell(r) \, {\rm d}r, \quad \mathbb{P}\text{-a.s.}
\end{aligned}
\end{equation}

If \( y \) is chosen from another point process \(\Psi\) rather than arbitrarily from the whole space \(\mathbb{R}^2\), we obtain the weak version of the shot-noise regulation, defined as follows.

\begin{definition}\label{weak-shot-noise-reg}
(Weak $(\sigma, \rho, \nu)$-shot-noise regulation). A stationary marked point process \(\widetilde{\Phi}\) is said to be weakly $(\sigma, \rho, \nu)$-shot-noise regulated with respect to \(\Psi\) if, for all non-negative, bounded, and non-increasing functions \(\ell: \mathbb{R}^+ \rightarrow \mathbb{R}^+\), for all \(R > 0\), and for all \( y \in \Psi \),
\begin{equation}\label{shot-noise-eq-weak}
\begin{aligned}
    \sum_{x \in \Phi \cap b(o, R)} P_{x} \ell(\|x\|) \leq \sigma \ell(0) + \rho \int_0^R \ell(r) \, {\rm d}r \\
    \quad + 2\nu \int_0^R r \ell(r) \, {\rm d}r,  \quad \mathbb{P}_{\Psi}^{o}\text{-a.s.}
\end{aligned}
\end{equation}
\end{definition}

When the transmit power is constant and normalized to 1, that is, \( P_{x} = 1 \) for all \( x \in \Phi \), the $(\sigma, \rho, \nu)$-shot-noise regulation reduces to the $(\sigma, \rho, \nu)$-shot-noise regulation presented in \cite{feng2023spatial}.

By considering the special case where the function {\(\ell(\cdot) \equiv 1\)}, we obtain a formulation analogous to the $(\sigma, \rho, \nu)$-ball regulation for all \(R > 0\), expressed as:
\begin{equation}
P_{\rm total}(o,R) \leq \sigma + \rho R + \nu R^2, \quad \mathbb{P}\text{-a.s.}
\end{equation}

\begin{remark}
As \( R \rightarrow +\infty \), (\ref{shot-noise-eq}) becomes:
\begin{equation}
    \sum_{x \in \Phi} P_{x} \ell(\|x\|) \leq A_{\ell}, \quad \mathbb{P}\text{-a.s.}, \label{eqn:R_inf_bound}
\end{equation}
where
\begin{equation}
    A_{\ell} \triangleq \sigma \ell(0) + \rho \int_0^\infty \ell(r) \, {\rm d}r + 2\nu \int_0^\infty r \ell(r) \, {\rm d}r \label{eqn:Ali}
\end{equation}
represents the upper bound on the total weighted sum of all points in \(\Phi\).
% , weighted by the function \(\ell\) and the transmit power \( P_{x} \). 
\end{remark}

{The physical meaning of \((\sigma, \rho, \nu)\)-shot-noise regulation is that it describes how power is regulated within a spatially distributed network.} Specifically, for a user at the origin, this regulation governs the cumulative impact of both useful signal power and interference from other transmitters. The total impact is bounded by the parameters \(\sigma\), \(\rho\), and \(\nu\). These constraints are functionally implemented through the path loss function \(\ell(\cdot)\), which models the decay of signal strength with distance in wireless networks. 

The relationship between ball regulation and shot-noise regulation is established in the following theorem, which states their equivalence.

\begin{theorem}
\label{thm:equ}
    A stationary marked point process \(\widetilde{\Phi}\) is strongly \((\sigma, \rho, \nu)\)-shot-noise regulated if and only if it is strongly \((\sigma, \rho, \nu)\)-ball regulated.
\end{theorem}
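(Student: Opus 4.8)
The plan is to prove the two implications separately. The forward direction (shot-noise $\Rightarrow$ ball) is essentially immediate, while the reverse direction (ball $\Rightarrow$ shot-noise) carries all the work and is where I would concentrate the argument.

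For shot-noise $\Rightarrow$ ball, I would simply apply the shot-noise inequality \eqref{shot-noise-eq} to the constant function $\ell \equiv 1$, which is non-negative, bounded, and non-increasing. Then $\ell(0)=1$, $\int_0^R \ell(r)\,{\rm d}r = R$, and $2\int_0^R r\ell(r)\,{\rm d}r = R^2$, so the right-hand side collapses to $\sigma + \rho R + \nu R^2$, while the left-hand side is exactly $P_{\rm total}(o,R)$. This yields ball regulation for every $R>0$ (and trivially for $R=0$), and indeed this special case is already recorded in the text preceding the theorem.

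For ball $\Rightarrow$ shot-noise, the key idea is a layer-cake decomposition of $\ell$. Since $\ell$ is non-negative and non-increasing, I would write $\ell(s)=\int_0^{\ell(0)}\mathbbm{1}[\ell(s)>u]\,{\rm d}u$, and because $\ell$ is non-increasing each superlevel set $\{s:\ell(s)>u\}$ is an interval $[0,t(u))$ (up to its endpoint), where $t(u)\triangleq\sup\{s\ge 0:\ell(s)>u\}$. Substituting into the weighted sum and using Tonelli's theorem to interchange the non-negative sum over points with the $u$-integral gives
\[
\sum_{x \in \Phi \cap B(o,R)} P_x\, \ell(\|x\|) = \int_0^{\ell(0)} P_{\rm total}\bigl(o,\min(t(u),R)\bigr)\, {\rm d}u,
\]
since the inner sum over $x$ with $\|x\|<\min(t(u),R)$ is precisely $P_{\rm total}(o,\min(t(u),R))$. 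I would then apply the ball-regulation bound at radius $\min(t(u),R)$, namely $P_{\rm total}(o,\min(t(u),R)) \le \sigma + \rho\min(t(u),R) + \nu\min(t(u),R)^2$, valid almost surely simultaneously for all radii because $r\mapsto P_{\rm total}(o,r)$ is non-decreasing and left-continuous, so the a.s.\ bound at rational radii extends to all radii. Integrating in $u$ and converting each term back via Tonelli --- using $\min(t(u),R)=\int_0^R \mathbbm{1}[r<t(u)]\,{\rm d}r$ and $\min(t(u),R)^2=\int_0^R 2r\,\mathbbm{1}[r<t(u)]\,{\rm d}r$ together with the identity $\{u<\ell(r)\}=\{r<t(u)\}$ --- reproduces exactly $\sigma\ell(0)+\rho\int_0^R\ell(r)\,{\rm d}r+2\nu\int_0^R r\ell(r)\,{\rm d}r$, which is the desired shot-noise bound.

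The main obstacle will be the measure-theoretic bookkeeping in the reverse direction: reconciling the open-ball convention with the endpoint behaviour of the superlevel sets $\{s:\ell(s)>u\}$ (these discrepancies affect only a null set of $u$ and do not change the integrals), and ensuring the ball inequality is invoked on a single almost-sure event valid for all radii at once rather than for each fixed radius separately. An equivalent route that avoids the double application of Tonelli is Stieltjes integration by parts against the non-decreasing cumulative power $\Lambda(r)=P_{\rm total}(o,r)$, writing $\sum_x P_x\ell(\|x\|)=\int \ell\,{\rm d}\Lambda$ and integrating by parts using ${\rm d}\ell\le 0$; I would adopt whichever formulation keeps the endpoint and continuity arguments cleanest.
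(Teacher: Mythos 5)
Your forward direction is identical to the paper's (take $\ell\equiv 1$), but your reverse direction takes a genuinely different route. The paper partitions $B(o,R)$ into $n$ concentric annuli, bounds $\ell$ from above by its value on each inner annulus boundary, lets $n\to\infty$ to obtain the Riemann--Stieltjes integral $\int_0^R \ell(r)\,\mathrm{d}P_{\rm total}(o,r)$, and then integrates by parts, using $\mathrm{d}\ell\le 0$ together with the ball bound on $P_{\rm total}(o,r)$ --- essentially the Stieltjes variant you mention at the end. Your primary argument instead uses a layer-cake decomposition of $\ell$ and two applications of Tonelli, which reduces everything to applying the ball bound at the (random) radii $\min(t(u),R)$. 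Both are correct; yours avoids the convergence-of-partitions step and the Stieltjes integration-by-parts formula, which is delicate precisely when $\ell$ and $r\mapsto P_{\rm total}(o,r)$ share a discontinuity (a point of $\Phi$ sitting at a jump of $\ell$), whereas the paper's is shorter and makes the role of monotonicity more transparent. One small correction to your bookkeeping remark: the set of levels $u$ for which the superlevel set $\{s:\ell(s)>u\}$ is closed at its right endpoint is \emph{not} a null set when $\ell$ has a genuine jump at some $t_0$ (it is a whole interval of $u$'s of length equal to the jump), so the identity $S(u)=P_{\rm total}(o,\min(t(u),R))$ can fail with the open-ball convention; the clean fix is to bound $S(u)$ by the closed-ball total power and note that the ball bound extends to closed balls since $\sigma+\rho r+\nu r^2$ is continuous and $P_{\rm total}(o,r+\varepsilon)\downarrow$ the closed-ball value as $\varepsilon\downarrow 0$ (alternatively, for a stationary process no point lies at a fixed distance from $o$ almost surely). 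Your point about upgrading the per-radius a.s.\ bound to a single a.s.\ event for all radii via rationals, left-continuity, and monotonicity is valid and is in fact a subtlety the paper's proof passes over silently.
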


\begin{proof}
% Necessary Part:
To prove that strong \((\sigma, \rho, \nu)\)-shot-noise regulation implies strong \((\sigma, \rho, \nu)\)-ball regulation, we set the weight function \(\ell(\cdot) \equiv 1\) for all \(r \geq 0\) in (\ref{shot-noise-eq}). The shot-noise regulation simplifies to the total power allocated to all transmitters in \( b(o,r)\), bounded as:
\begin{equation}
    P_{\rm total}(o,r) = \sum_{x\in \Phi \cap b(o, r)} P_{x} \leq \sigma + \rho r + \nu r^2, \quad \mathbb{P}\text{-a.s.}
\end{equation}

Thus, strong \((\sigma, \rho, \nu)\)-ball regulation holds.

% Sufficient Part:
To prove that the strong \((\sigma, \rho, \nu)\)-ball regulation implies the strong \((\sigma, \rho, \nu)\)-shot-noise regulation, for $R>0$, we partition the open ball \(b(o, R)\) into \(n\) concentric annuli with radii \(r_k = k \Delta\), where \(\Delta \triangleq R / n\), \(n \in \mathbb{N}\), and \(k = 0, 1, \dots, n\) {(see Fig. \ref{fig-circleSegmantation})}. Let \(B_k \triangleq b(o, r_k)\), and denote the value of the weight function \(\ell(r)\) at radius \(r_k\) as \(l_k \triangleq \ell(r_k)\). The shot-noise generated by \(\ell\) within \(b(o, R)\) is
\begin{equation*}
\begin{aligned}
    &\sum_{x \in \Phi \cap b(o, R)} P_{x} \ell(\|x\|) \\
    &\overset{(\mathrm{a})}{=} \sum_{k=1}^{n} \sum_{x \in \Phi \cap (B_k \setminus B_{k-1})} P_{x} \ell(\|x\|) \\
    &\overset{(\mathrm{b})}{\leq} \sum_{k=1}^{n} \sum_{x \in \Phi \cap (B_k \setminus B_{k-1})} P_{x} l_{k-1} \\
    &\overset{(\mathrm{c})}{=} \sum_{k=1}^{n} l_{k-1} \left( P_{\rm total}(o,r_k) - P_{\rm total}(o,r_{k-1}) \right).
\end{aligned}
\end{equation*}

- {Step \(\mathrm{(a)}\): The sum over the circular region \(b(o, R)\) is partitioned into disjoint annular regions \(B_k \setminus B_{k-1}\).}

- Step \(\mathrm{(b)}\): The inequality follows from the monotonicity of \(\ell(r)\), which ensures that \(\ell(\|x\|) \leq l_{k-1}\) for \(x \in B_k \setminus B_{k-1}\).

- Step \(\mathrm{(c)}\): The summation over each annular region reduces to the difference in cumulative power between \(P_{\rm total}(o,r_k)\) and \(P_{\rm total}(o,r_{k-1})\).

As \(n \to \infty\), the summation converges to the Riemann-Stieltjes integral:
\begin{equation}
    \int_{0}^{R} \ell(r) \, {\rm d}P_{\rm total}(o,r),
\end{equation}
since \(P_{\rm total}(o,r)\) is bounded, non-decreasing, and integrable on \(\mathbb{R}^2\). This integral can be expanded using integration by parts:
\begin{equation*}
\begin{aligned}
    &\int_{0}^{R} \ell(r) \, {\rm d}P_{\rm total}(o,r) \\
    &= \ell(R) P_{\rm total}(o,R) - \ell(0) P_{\rm total}(o,0) -\int_{0}^{R} P_{\rm total}(o,r) \, {\rm d}\ell(r) \\   &\overset{(\mathrm{a})}{\leq} \ell(R) \left( \sigma + \rho R + \nu R^2 \right) - \int_{0}^{R} \left( \sigma + \rho r + \nu r^2 \right) \, {\rm d}\ell(r) \\
    &= \sigma \ell(0) + \rho \int_{0}^{R} \ell(r) \, {\rm d}r + 2 \nu \int_{0}^{R} r \ell(r) \, {\rm d}r, \quad \mathbb{P}\text{-a.s.}
\end{aligned}
\end{equation*}

In step \(\mathrm{(a)}\), the inequality uses the assumption that {$\ell(r)$ is non-negative and non-increasing , as well as that} \(\Phi\) is strongly \((\sigma, \rho, \nu)\)-ball regulated, bounding \(P_{\rm total}(o,R)\). The proof is complete.
\end{proof}

\begin{figure}
\centering
\includegraphics[width=0.4\textwidth]{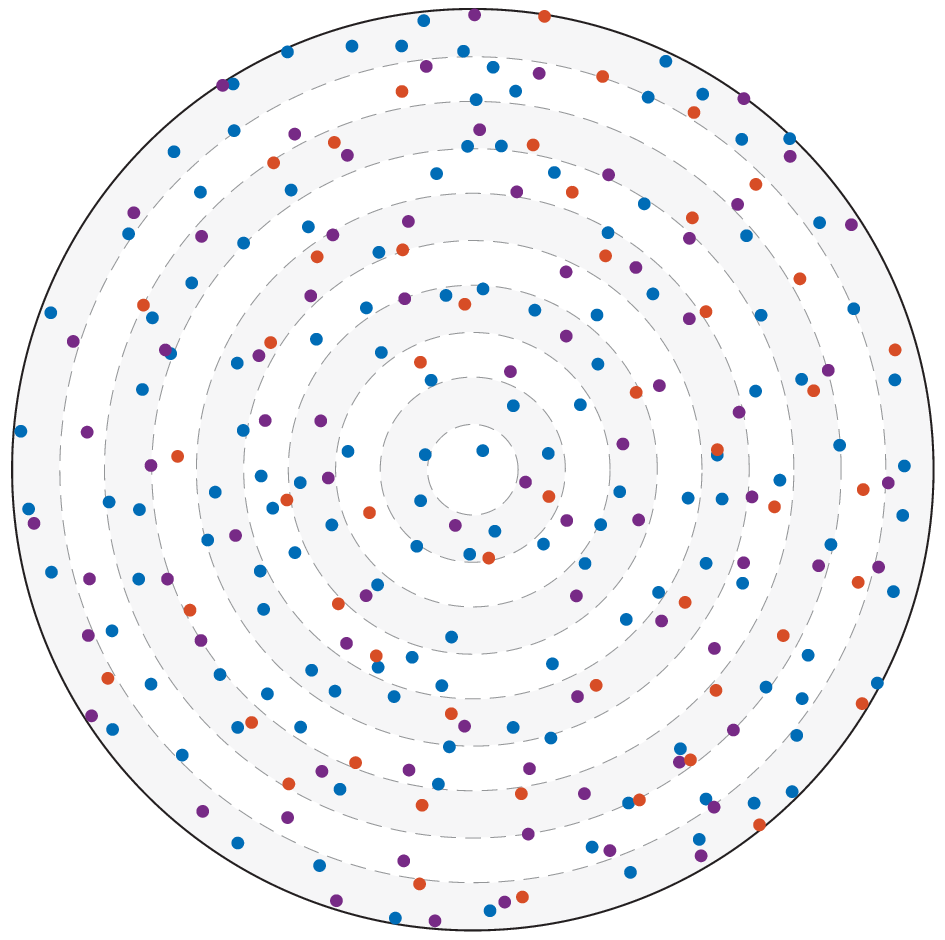} % 替换为实际文件名
\caption{
Illustration of a circular area with radius \(R = 15\), displaying the spatial distribution of transmitters operating at three distinct power levels. The circular region is divided into \(n = 10\) concentric annular regions, each with a spacing of \(\Delta = R/n = 1.5\). Transmitters are represented by solid dots, with orange, purple, and blue dots corresponding to high-power, medium-power, and low-power transmitters, respectively. The transmitter locations are initially generated using a PPP with an intensity \(\lambda = 0.3\), and then filtered to form a type II MHCPP. The resulting minimum separation distances (hardcore distances) are \(H_1 = 1\) for high-power transmitters, \(H_2 = 0.8\) for medium-power transmitters, and \(H_3 = 0.5\) for low-power transmitters. The dashed lines indicate the boundaries of the annular regions used for analytical segmentation in the proof process.
}
\label{fig-circleSegmantation}
\end{figure}

% \begin{figure}
% \centering
% \includegraphics[width=0.4\textwidth]{fig2.circleSegmantation.eps}
% \caption{Diagram of a circular area of radius \(R = 15\) showing transmitter locations of three different transmit power levels. The circular area is equally divided into \(n = 10\) annular regions with a spacing of \(\Delta = R/n = 1.5\). Orange, purple, and blue solid dots represent transmitters  corresponding to high-power, medium-power, and low-power transmitters, respectively. The transmitters are initially distributed according to a PPP with intensity \(\lambda = 0.3\) and then thinned to obtain a type II MHCPP. The corresponding hardcore distances are \(H_1 = 1\)  for high-power, \(H_2 = 0.8\) for medium-power, and \(H_3 = 0.5\) for low-power transmitters.}
% \label{fig-circleSegmantation}
% \end{figure}

Similarly, we get the following corollary, which states the equivalence for weak regulations. 
\begin{corollary}
\label{cor:equ}
    A stationary marked point process \(\widetilde{\Phi}\) is weakly \((\sigma, \rho, \nu)\)-shot-noise regulated with respect to a jointly stationary point process \(\Psi\) if and only if it is weakly \((\sigma, \rho, \nu)\)-ball regulated with respect to \(\Psi\).
\end{corollary}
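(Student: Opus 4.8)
The plan is to mirror the proof of Theorem \ref{thm:equ} essentially verbatim, with the single change that every almost-sure statement is now interpreted under the Palm probability $\mathbb{P}_\Psi^o$ rather than under $\mathbb{P}$. The crucial observation is that the entire argument establishing Theorem \ref{thm:equ} is \emph{pathwise}: for each fixed realization on which the ball regulation holds, the chain of inequalities (the annular decomposition, the use of the monotonicity of $\ell$, the passage to the Riemann--Stieltjes integral, and the integration by parts) is a purely deterministic manipulation that never invokes the distribution of $\widetilde{\Phi}$. Consequently, switching the reference measure only changes the full-measure event on which these deterministic bounds are asserted, and no new analytic content is required.

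For the direction weak shot-noise $\Rightarrow$ weak ball, I would substitute $\ell(\cdot) \equiv 1$ into the weak shot-noise inequality (\ref{shot-noise-eq-weak}). Since $\int_0^R 1 \, {\rm d}r = R$ and $2\int_0^R r \, {\rm d}r = R^2$, the right-hand side collapses to $\sigma + \rho R + \nu R^2$, yielding the weak ball regulation $\mathbb{P}_\Psi^o$-a.s. This direction is immediate and parallels the $\ell \equiv 1$ specialization used in the strong case.

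For the converse, weak ball $\Rightarrow$ weak shot-noise, I would first fix the full-measure event $E \in \mathcal{A}$ on which $P_{\rm total}(o, r) \leq \sigma + \rho r + \nu r^2$ holds simultaneously for all $r \geq 0$; by hypothesis $\mathbb{P}_\Psi^o(E) = 1$. On $E$, the map $r \mapsto P_{\rm total}(o, r)$ is non-decreasing (points are only added as $r$ grows) and bounded on every compact set, so the Riemann--Stieltjes integral $\int_0^R \ell(r) \, {\rm d}P_{\rm total}(o, r)$ is well defined pathwise. I would then reproduce steps (a)--(c) of the annular partition and the integration-by-parts bound exactly as in Theorem \ref{thm:equ}, obtaining the weak shot-noise inequality on $E$, hence $\mathbb{P}_\Psi^o$-a.s.

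The only point requiring care, and the closest thing to an obstacle, is confirming that the pathwise regularity of $P_{\rm total}(o, \cdot)$ (monotonicity, local boundedness, and integrability) survives under the Palm measure, so that the Riemann--Stieltjes construction is legitimate $\mathbb{P}_\Psi^o$-a.s. This follows because these are properties of individual sample paths rather than distributional assumptions, and the weak ball regulation itself supplies the required local bound on the integrator. With this verified, the proof reduces to the bookkeeping of the reference measure and is complete.
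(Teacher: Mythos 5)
Your proposal is correct and matches the paper's approach exactly: the paper's proof of Corollary \ref{cor:equ} simply says to apply the techniques of Theorem \ref{thm:equ}, and your observation that the argument is pathwise (so only the reference measure changes from $\mathbb{P}$ to $\mathbb{P}_{\Psi}^{o}$) is precisely the justification the paper leaves implicit.
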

\begin{proof}
    Apply the techniques in the proof of Theorem \ref{thm:equ}.
\end{proof}

\subsection{Performance Bounds}
In our study, we focus on two channel models: Rayleigh fading and the absence of fading. The analysis can be easily generalized to general fading \cite{feng2023spatial}. Without loss of generality, we consider a receiver located at the origin, with its dedicated transmitter positioned at \(x_0\). Let \(r_0\) denote the distance from the receiver to its dedicated transmitter \(x_0\), i.e., \(r_0=\|x_0\|\).

Using a bounded, non-increasing, continuous, and integrable function \(\ell(r)\) as the path loss function, we can express the SINR for the receiver as
\begin{equation}
    \mathrm{SINR} = \frac{P_{x_0} h_{x_0} \ell(r_0)}{I + W},
\end{equation}
where \(P_{x_0}\) is the transmit power of the transmitter at \(x_0\), \(W\) is the variance of the additive white Gaussian noise, and \(I\) is the interference power from other transmitters in the network expressed as
\begin{equation}
\begin{aligned}
I &= \sum_{x \in \Phi \setminus \{x_0\}}  P_{x} h_{x} \ell(\|x\|).
\end{aligned}\label{eqn:interwithfading_1tier}
\end{equation}

We denote \(h_x\) as the small-scale fading coefficient from the transmitter located at position \(x\) to {the user at the origin.}
% its corresponding receiver. 
In the absence of fading, we set \(h_x = 1\) for all transmitters, which simplifies the SINR expression. When considering channel fading, we adopt a Rayleigh fading model, where the fading coefficients are independent and identically distributed (i.i.d.). In this case, the power attenuation \(h_x\) follows an exponential distribution with unit mean, i.e., \(h_x \sim \text{Exp}(1)\).

\begin{theorem}
\label{thm:interwithoutfading}
When a stationary marked point process \(\widetilde{\Phi}\) is either strongly \((\sigma, \rho, \nu)\)-ball regulated or weakly \((\sigma, \rho, \nu)\)-ball regulated with respect to the receiver point process, and in the absence of channel fading (i.e., \(h_{x} \equiv 1\)), the interference conditioned on the transmit power \(P_{x_0}\) at \(x_0\) can be almost surely bounded as
\begin{equation}
    I \leq A_{\ell} - P_{x_0} \ell(r_0), \quad \mathbb{P}\text{-a.s.} \label{eqn:interwithoutfading_1tier}
\end{equation}

Consequently, an almost sure lower bound for the SINR, conditioned on the transmit power \(P_{x_0}\), is given by
\begin{equation}
    \mathrm{SINR} \geq \frac{P_{x_0} \ell(r_0)}{A_{\ell} - P_{x_0} \ell(r_0)}, \quad \mathbb{P}\text{-a.s.} \label{eqn:withoutfading_1tier}
\end{equation}
% \begin{equation}
%     \mathrm{SINR} \geq \frac{\theta P_{x_0} \ell(r_0)}{A_{\ell} - P_{x_0} \ell(r_0)}, \quad \mathbb{P}\text{-a.s.} \label{eqn:withoutfading_1tier}
% \end{equation}
\end{theorem}
\begin{proof}
Since the transmitter marked point process \(\widetilde{\Phi}\) is \((\sigma, \rho, \nu)\)-ball regulated, and due to the equivalence between ball regulation and shot-noise regulation established in Theorem \ref{thm:equ} and Corollary \ref{cor:equ}, we conclude that the marked point process \(\widetilde{\Phi}\) is also \((\sigma, \rho, \nu)\)-shot-noise regulated.

In a scenario without channel fading (that is, \(h_{x} = 1\)), {the interference experienced by a user can be bounded through (\ref{eqn:R_inf_bound})}.
Given this upper bound on interference, we can further derive almost sure lower bounds for the SINR in (\ref{eqn:withoutfading_1tier}). 
\end{proof}

{Theorem \ref{thm:interwithoutfading} provides an upper bound on interference in an integral form under the assumption of no fading. By assuming the path loss function \(\ell(r)=\min\{1, r^{-\alpha}\}\), where \(\alpha\) is the path loss exponent, we derive an explicit mathematical expression of the upper bound of interference in this special case.
\begin{example}
Considering the special case where \(\ell(r)=\min\{1, r^{-\alpha}\}\) with \(\alpha > 2\), the interference is upper bounded by
\begin{equation}
	\label{I-upper-bound}
	I \leq \sigma + \rho\frac{\alpha}{\alpha - 1} + \nu\frac{\alpha}{\alpha - 2} - P_{x_0}\min\{1, r_0^{-\alpha}\}, \quad \mathbb{P}\text{-a.s.}
\end{equation}
\end{example}}
% The proof is omitted here. }

Considering the case with channel fading, for a target SINR thresholds $\theta> 0$, the link success probability is
\begin{equation}
    P_{\mathrm{s}}(\theta \mid \widetilde{\Phi}) \triangleq \mathbb{P}(\mathrm{SINR} > \theta \mid \widetilde{\Phi}), \label{eqn:linkreliability_1tier}
\end{equation}
where $P_{\mathrm{s}}(\theta \mid \widetilde{\Phi}) \in [0, 1]$ is a random variable conditioned on the locations and the transmit powers of all transmitters. Link success probability can be interpreted as the probability that the SINR of a link with its user located at the origin exceeds the threshold $\theta$, given a network and power control realization $\widetilde{\Phi}$, and accounting for the effects of fading. The distribution associated with $P_{\mathrm{s}}(\theta \mid \widetilde{\Phi})$ is known as the \textit{meta distribution} \cite{haenggi2015meta, feng2020separability}.

Although the link success probability \(P_{\mathrm{s}}(\theta \mid \widetilde{\Phi})\) is subject to randomness due to fading, a definite lower bound \(\gamma_0\) can be established such that
\begin{equation}
    \mathbb{P}\left(P_{\mathrm{s}}(\theta \mid \widetilde{\Phi}) > \gamma_0\right) = 1.
\end{equation}

This value, \(\gamma_0\), is termed the success probability lower bound for all links. Similarly, if we require that \(80\%\) of the links meet the SINR threshold requirement, it can be expressed as
\(\mathbb{P}\left(P_{\mathrm{s}}(\theta \mid \widetilde{\Phi}) > \gamma_0\right) = 0.8\).  {Notably, in the absence of fading, the link success probability \(P_{\mathrm{s}}(\theta \mid \widetilde{\Phi}) \in \{0,1\}\).}  

The challenge then is to determine how to achieve a deterministic lower bound \(\gamma_0\) for all links within a large-scale wireless network through our proposed regulatory measures.

% \subsection{Properties}

% 1. (Superposition) Each class of users is subject to the $\beta$-threshold regulation, the power of all classes summed together are subject to a total threshold regulation, which is $\sum\beta$-ball regulated.

% 2. (Displacement) If the network is ball regulated and the amount of power change is bounded and stationarity-preserving, then the network after the power change is still ball regulated.

\begin{theorem}
\label{thm:boundnofading_1tier}
Considering Rayleigh fading, if a stationary marked point process \(\widetilde{\Phi}\) is {strongly} \((\sigma, \rho, \nu)\)-shot-noise regulated, then the link success probability given the transmit power \(P_{x_0}\) at \(x_0\) satisfies
\begin{equation}
    P_{\mathrm{s}}(\theta \mid \widetilde{\Phi}) \geq \exp\left(-\frac{\theta (W + A_{\ell})}{P_{x_0} \ell(r_0)} + \theta\right), \quad \mathbb{P}\text{-a.s.}, \label{eqn:successprob_withfading}
\end{equation}
where \(A_{\ell}\) is defined in (\ref{eqn:Ali}).
\end{theorem}

\begin{proof}
For users with a target SINR threshold \(\theta > 0\) in Rayleigh fading channels, the link success probability as given by (\ref{eqn:linkreliability_1tier}) is expressed as
\begin{equation}
\begin{aligned}
P_\mathrm{s}(\theta \mid \widetilde{\Phi}) 
&\overset{\mathrm{(a)}}{=} \mathbb{P}\left[\frac{P_{x_0} h_{x_0} \ell(r_0)}{I + W} > \theta \Bigg| \widetilde{\Phi}\right] \\
&\overset{\mathrm{(b)}}{=} \mathbb{E}\left[\exp\left(-\frac{\theta (I + W)}{P_{x_0} \ell(r_0)}\right) \Bigg| \widetilde{\Phi}\right],
\end{aligned}
\end{equation}
where \(\mathrm{(a)}\) follows from the definition of SINR, and \(\mathrm{(b)}\) utilizes the property that \(h_x\), the small-scale fading factor, is exponentially distributed. 
Here, \((x_0, P_{x_0}) \in \widetilde{\Phi}\), where \(x_0\) represents the designated transmitter for the receiver located at the origin, and \(P_{x_0}\) denotes its transmit power.

% The link success probability can be derived as
{By substituting the interference under fading given in (\ref{eqn:interwithfading_1tier}), the link success probability can be derived as}
\begin{equation*}
\begin{aligned}
&P_\mathrm{s}(\theta \mid \widetilde{\Phi}) = \exp\left(-\frac{\theta W}{P_{x_0} \ell(r_0)}\right) \\
&\times \mathbb{E}\left[\exp\left(-\frac{\theta \sum_{(x,P_{x}) \in \widetilde{\Phi} \setminus (x_0,P_{x_0})} P_{x} h_{x} \ell(\|x\|)}{P_{x_0} \ell(r_0)}\right)\right] \\
&\overset{{\mathrm{(a)}}} {=} \exp\left(-\frac{\theta W}{P_{x_0} \ell(r_0)}\right) \prod_{(x,P_{x}) \in \widetilde{\Phi}  \setminus (x_0,P_{x_0})} \frac{1}{1 + \frac{\theta P_{x} \ell(\|x\|)}{P_{x_0} \ell(r_0)}} \\
&= \exp\bigg(-\frac{\theta W}{P_{x_0} \ell(r_0)} -\!\!\!\!\!\!\!\!\!\!\!\!\!\! \sum_{(x,P_{x}) \in \widetilde{\Phi}  \setminus (x_0,P_{x_0})}\!\!\!\!\!\!\!\!\!\!\!\!\!\! \ln\left(1 + \frac{\theta P_{x} \ell(\|x\|)}{P_{x_0} \ell(r_0)}\right)\bigg) \\
&\overset{{\mathrm{(b)}}} {\geq} \exp\bigg(-\frac{\theta W}{P_{x_0} \ell(r_0)} - \frac{\theta}{P_{x_0} \ell(r_0)} \!\!\!\!\!\!\!\sum_{(x,P_{x}) \in \widetilde{\Phi}  \setminus (x_0,P_{x_0})} \!\!\!\!\!\!\!\!\!\!\!\!\!\!\!\!\! P_{x} \ell(\|x\|)\bigg) \\
&\overset{{\mathrm{(c)}}} {\geq} \exp\left(-\frac{\theta W}{P_{x_0} \ell(r_0)} - \frac{\theta (A_{\ell} - P_{x_0} \ell(r_0))}{P_{x_0} \ell(r_0)}\right) \\
&= \exp\left(-\frac{\theta (W + A_{\ell})}{P_{x_0} \ell(r_0)} + \theta\right), \quad \mathbb{P}\text{-a.s.},
\end{aligned}
\end{equation*}
{where \(\mathrm{(a)}\) follows from the independence assumption of fading and the assumption that \(\mathbb{E}[h_x]=1\), \(\mathrm{(b)}\) follows from the inequality \(\ln(1 + x) < x\) for \(x > 0\), and \(\mathrm{(c)}\) follows from the asymptotic upper bound derived in (\ref{eqn:R_inf_bound}). Expanding and rearranging, we get the link success probability lower bound in (\ref{eqn:successprob_withfading}).}
\end{proof}

The result established in the theorem highlights the impact of shot-noise regulation on link success probability in Rayleigh fading channels. The lower bound derived for the SINR demonstrates that effective ball regulation and interference management can significantly enhance network performance, even in the presence of fading. 

\section{Deterministic Performance Guarantees with Heterogeneous Traffic}
In an irregularly deployed wireless network, links may be in close proximity, leading to significant interference and degraded performance. To ensure QoS for diverse users, we can impose a minimum distance between nodes during deployment or implement medium access protocols such as CSMA to isolate links spatially. 
The key question is how to establish such isolation using the regulations proposed in the previous section to ensure the required performance.

Assume that there are \(M\) classes of links in a large wireless network, each with different target SINR thresholds \(\theta_1, \theta_2, \ldots, \theta_M\), where \(\theta_1 > \theta_2 > \ldots > \theta_M > 0\). We denote the type II MHCPP corresponding to the \(i\)-th link class by a stationary and ergodic point process \(\Phi_i\) on \(\mathbb{R}^2\), with a hardcore distance of \(H_i\). The superposition of point processes for different link classes is defined as \(\Phi \triangleq \bigcup_{i=1}^M \Phi_i\). It is important to note that the point processes \(\Phi_i\) are assumed to be independent of each other.

Let \(\Psi_i\) denote the point process corresponding to the receivers, which is jointly stationary and ergodic with \(\Phi_i\). The link distance \(r_0\) between each receiver and its dedicated transmitter is fixed. For simplicity, we assume that the bandwidth is normalized to 1. The transmit power of all transmitters within the \(i\)-th link class is constant and denoted by \(P_i\).

In the following discussion, we consider mutual exclusion among transmitters and employ the MHCPP model (see Fig. \ref{fig-bipolar}). {This model enforces mutually exclusive regulations for transmitter deployment, ensuring that the distance between any two transmitters belonging to the \(i\)-th link class is at least the hardcore distance \(H_i\).}

We utilize a type II MHCPP, which is more practical for modeling CSMA protocols compared to type I. 
% In a type I MHCPP, conflicting nodes are removed simultaneously, whereas in a type II MHCPP, each link is assigned a random timestamp, and the link with the earlier timestamp is retained. This mechanism allows the type II MHCPP to achieve a higher average density, thereby accommodating more users compared with the type I MHCPP. 
We will further explore the relationship between the mutually exclusive of the MHCPP and the regulations introduced in the previous section.

In our analysis, we examine all links within large-scale wireless networks characterized by multi-class traffic. The locations of transmitters are modeled as independent MHCPPs, thinned from PPPs with different hardcore distances. We assume that the retained links in the network are always active, enabling us to derive worst-case bounds without considering the impact of queueing. Our primary interest is whether we can guarantee the requirements of different classes of links through the introduction of regulations.

\begin{figure*}
\centerline{\includegraphics[width=1\textwidth]{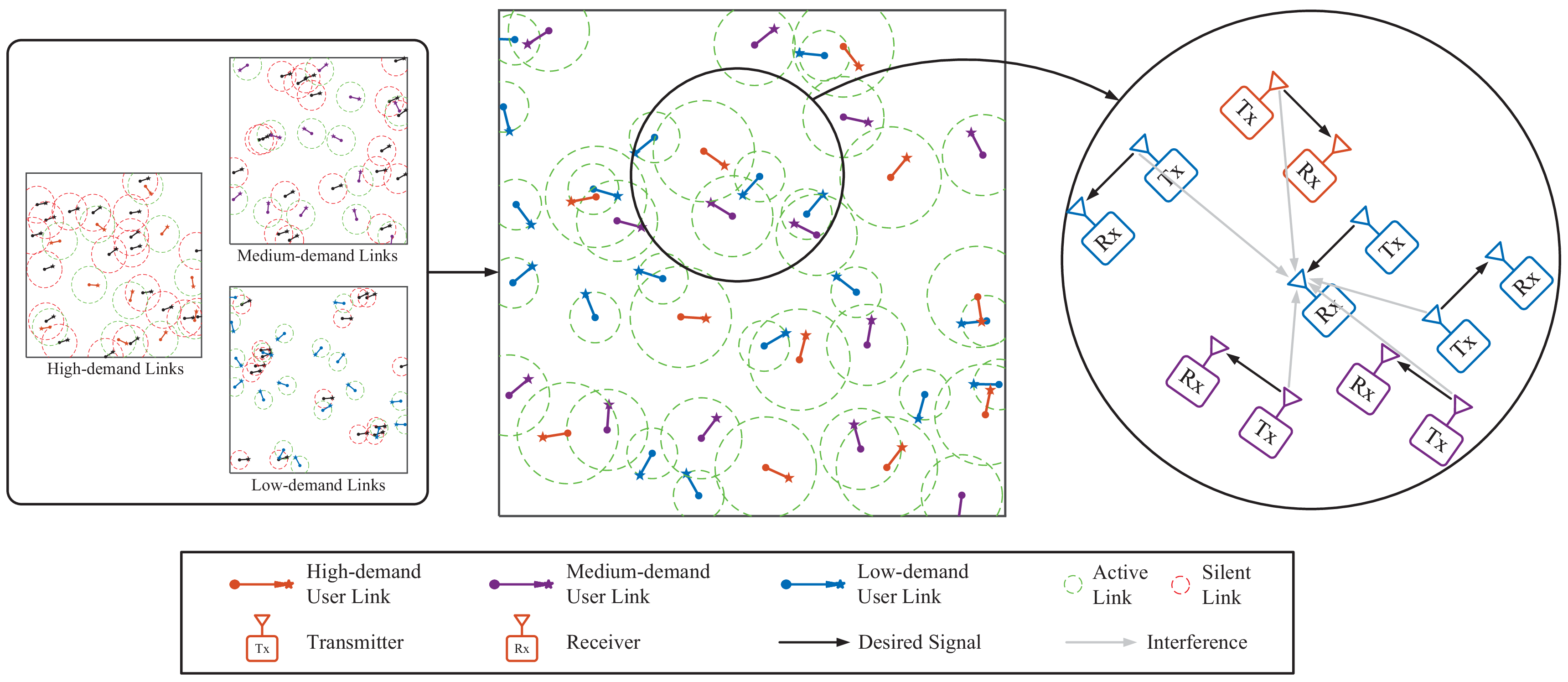}}
% 修改为多个不同硬核距离的硬核点过程叠加
\caption{Diagram of a bipolar network. Solid dots within dashed circles represent transmitters, distributed as a PPP with intensity $\lambda = 0.3$. Stars mark the uniformly distributed receiver locations {with three user classes}. Transmitters and receivers are separated by a fixed distance $r_0 = 0.5$. The network employs an MHCPP of type II for the different demand links respectively, thinning the PPP where retained transmitters are shown in green dashed circles. Active links are color-coded to indicate user SINR thresholds: orange links represent high threshold users with hardcore distance $H_1=1$, purple links represent medium threshold users with $H_2=0.8$ and blue for low threshold users with $H_3=0.5$.}
\label{fig-bipolar}
\end{figure*}

In this section, we hypothesize that the channels within the small-scale wireless network models are characterized by two cases: Rayleigh fading and the absence of fading. Using a bounded, non-increasing, continuous, and integrable function \(\ell(r)\) as the path loss function, we can express the SINR for the \(i\)-th link class as:
\begin{equation}
    \mathrm{SINR}_i = \frac{P_i h_{x_0} \ell(r_0)}{I + W},
\end{equation}
where \(I\) represents the interference power from other signals in the network, including contributions from transmitters in the \(i\)-th class as well as those in other classes.
\( h_{x_0} \) denotes the small-scale fading coefficient from a transmitter located at position \( x_0 \) to its corresponding receiver. 

\subsection{Performance Guarantee}
Since the MHCPP is a stationary point process, we can place the user of the \(i\)-th class at the origin without loss of generality and denote its dedicated transmitter by \(x_0 \in \Phi_i\). In a network with \(M\) classes of users, the total interference received by the user at the origin can be expressed as:
\begin{equation}
\begin{aligned}
% I &= \sum_{i=1}^M \sum_{x_i \in \Phi_{i} \setminus \{x_0\}} P_{i} h_{x_i} \ell(\|x_i\|) \\
I_i  &= \sum_{j=1}^M \sum_{x \in \Phi_{j}}  P_{j} h_{x} \ell(\|x\|) -P_{x_0}h_{x_0}\ell(r_0).
\end{aligned}
\end{equation}
% where \(h_{x_i}\) denotes the power of the small-scale fading from \(x_i \in \Phi_i\) to the origin.

\begin{theorem}
\label{thm:boundnofading}
In a heterogeneous scenario modeled by \(M\) MHCPPs and in the absence of channel fading (i.e., \(h_{x} \equiv 1\)), the interference experienced by a user in the \(i\)-th class can be bounded as
\begin{equation}
    I_i \leq \sum_{j=1}^M A_{\ell,j} - P_i \ell(r_0), \label{eqn:interwithoutfading}
\end{equation}
where the lower bound for the SINR experienced by a user in the \(i\)-th class is given by
\begin{equation}
    \mathrm{SINR}_i \geq \frac{P_i \ell(r_0)}{\sum_{j=1}^M A_{\ell,j} - P_i \ell(r_0)}, \label{eqn:withoutfading}
\end{equation}
and \(A_{\ell,j}\) represents the contribution from the \(j\)-th class to the overall interference, defined as
\begin{equation}
    A_{\ell,j} \triangleq \sigma_j \ell(0) + \rho_j \int_0^\infty \ell(r) \, {\rm d}r + 2\nu_j \int_0^\infty r \ell(r) \, {\rm d}r,
\end{equation}
where \(A_{\ell,j}\) provides an upper bound on the total weighted sum of all points in \(\Phi_j\), weighted by the path loss function \(\ell\) and the transmit power \(P_j\). Here, \(\sigma_j = P_j\), \(\rho_j = \frac{2\pi P_j}{\sqrt{12}H_j}\), and \(\nu_j = \frac{\pi P_j}{\sqrt{12}H_j^2}\). 
\end{theorem}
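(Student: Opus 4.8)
The plan is to reduce the heterogeneous multi-class interference to $M$ separate applications of the shot-noise bound already available for a single regulated process, and then combine them additively. First I would observe that the triple $(\sigma_j,\rho_j,\nu_j)=\bigl(P_j,\frac{2\pi P_j}{\sqrt{12}H_j},\frac{\pi P_j}{\sqrt{12}H_j^2}\bigr)$ appearing in the statement is precisely the ball-regulation triple produced by Lemma \ref{lem:hardcore} for a type-II MHCPP with hardcore distance $H_j$ and constant power mark $P_j$. Hence each class process $\Phi_j$ is strongly $(\sigma_j,\rho_j,\nu_j)$-ball regulated, and by the remark following Lemma \ref{lem:hardcore} it is also weakly $(\sigma_j,\rho_j,\nu_j)$-ball regulated with respect to the receiver process $\Psi_i$ of the tagged class.

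Second, I would invoke the equivalence between ball and shot-noise regulation (Theorem \ref{thm:equ} for the strong case, Corollary \ref{cor:equ} for the weak case) to upgrade each $\Phi_j$ to a $(\sigma_j,\rho_j,\nu_j)$-shot-noise regulated process. Letting $R\to\infty$ in the shot-noise inequality, i.e.\ using (\ref{eqn:R_inf_bound}), then yields the almost-sure total-weighted-power bound $\sum_{x\in\Phi_j}P_j\ell(\|x\|)\le A_{\ell,j}$ for each $j$, with $A_{\ell,j}$ defined as in the statement. Summing these $M$ bounds, equivalently applying the superposition Lemma \ref{lem:superposition} to the aggregate process $\Phi=\bigcup_j\Phi_j$ and then the equivalence, controls the total received power at the origin by $\sum_{j=1}^M A_{\ell,j}$.

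Third, since the tagged user at the origin is served by $x_0\in\Phi_i$ at the fixed link distance $r_0$, its desired-signal term is exactly $P_i\ell(r_0)$; subtracting it from the total received power gives the interference bound (\ref{eqn:interwithoutfading}). Substituting this upper bound on $I_i$ into the no-fading SINR expression ($h_{x_0}\equiv1$) and using the monotonicity of $t\mapsto P_i\ell(r_0)/(t+W)$ then produces the SINR lower bound (\ref{eqn:withoutfading}), which is the stated inequality in the interference-limited regime ($W=0$); retaining $W$ simply adds it to the denominator.

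The routine algebra is trivial; the one point that needs care is that all $M$ almost-sure bounds must hold simultaneously under the Palm measure $\mathbb{P}_{\Psi_i}^{o}$ of the tagged receiver process. This is where I would lean on the independence of the classes: $\Phi_j$ for $j\ne i$ is independent of $\Psi_i$, so conditioning on a point of $\Psi_i$ at the origin does not disturb its distribution and its weak bound coincides with the strong one, while for $j=i$ the weak bound with respect to $\Psi_i$ is exactly the statement furnished by the remark after Lemma \ref{lem:hardcore}. The intersection of finitely many probability-one events again has probability one, so the additive bound is valid $\mathbb{P}_{\Psi_i}^{o}$-almost surely.
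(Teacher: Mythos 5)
Your proposal is correct and follows essentially the same route as the paper's proof: Lemma \ref{lem:hardcore} to obtain the ball-regulation triple for each class, Theorem \ref{thm:equ} (or Corollary \ref{cor:equ}) to pass to shot-noise regulation, the $R\to\infty$ bound (\ref{eqn:R_inf_bound}) summed over the $M$ classes, and subtraction of the serving term $P_i\ell(r_0)$. Your added care about the simultaneous almost-sure bounds under the Palm measure of the tagged receiver process, and your remark that the stated SINR bound implicitly takes $W=0$, are refinements the paper's own (terser) proof does not spell out but do not change the argument.
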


\begin{proof}
Since the transmitters of the \(j\)-th class of links form a type II MHCPP with a hardcore distance \(H_j\) and constant transmit power \(P_j\), applying Lemma \ref{lem:hardcore} shows that the stationary marked point process \(\widetilde{\Phi}_j\triangleq\{(x,P_{x}):x\in\Phi_j, P_{x}=P_j\}\) is strongly \(\left(P_j, \frac{2\pi P_j}{\sqrt{12}H_j}, \frac{\pi P_j}{\sqrt{12}H_j^2}\right)\)-ball regulated. Due to the equivalence between strong ball regulation and strong shot-noise regulation established in Theorem \ref{thm:equ}, we conclude that the stationary marked point process \(\widetilde{\Phi}_j\) is also strongly \(\left(P_j, \frac{2\pi P_j}{\sqrt{12}H_j}, \frac{\pi P_j}{\sqrt{12}H_j^2}\right)\)-shot-noise regulated.

In a scenario without channel fading (i.e., \(h_{x} \equiv 1\)), the interference experienced by a user in the \(i\)-th class can be bounded using (\ref{eqn:R_inf_bound}) as (\ref{eqn:interwithoutfading}).
Given this upper bound on interference, we can derive further lower bounds for link SINR in (\ref{eqn:withoutfading}). 
\end{proof}

The theorem emphasizes the critical relationship between interference and SINR in heterogeneous wireless networks modeled by multiple MHCPPs. In particular, the proof can also be approached from a different perspective. 

Consider that the transmitters associated with the \(i\)-th link class are distributed as a type II MHCPP \(\Phi_i\) on \(\mathbb{R}^2\) with a minimum separation distance (hardcore distance) \(H_i\). Since we have established that each marked MHCPP is \(\left(P_j, \frac{2\pi P_j}{\sqrt{12}H_j}, \frac{\pi P_j}{\sqrt{12}H_j^2}\right)\)-ball regulated in Lemma \ref{lem:hardcore}, we can invoke the lemma below to demonstrate that the superposition of the \(M\) marked MHCPPs, defined as \(\widetilde{\Phi} = \bigcup_{i=1}^M \widetilde{\Phi}_i\), is also a ball regulated point process.

\begin{lemma}
\label{lem:superreg}
The superposition point process \(\widetilde{\Phi}\) of \(M\) marked MHCPPs is strongly \(\left({\sigma}, {\rho}, {\nu}\right)\)-ball regulated, 
where \({\sigma} = \sum_{i=1}^{M} P_i\), \({\rho} = \sum_{i=1}^{M} \frac{2\pi P_i}{\sqrt{12}H_i}\), and \({\nu} = \sum_{i=1}^{M} \frac{\pi P_i}{\sqrt{12}H_i^2}\).
\end{lemma}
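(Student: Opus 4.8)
The plan is to obtain Lemma~\ref{lem:superreg} as an immediate corollary of the two results already established earlier in the excerpt, namely Lemma~\ref{lem:hardcore} (the ball-regulation parameters of a single marked type~II MHCPP) and Lemma~\ref{lem:superposition} (the superposition property of ball regulations). First I would observe that each marked subprocess $\widetilde{\Phi}_i \triangleq \{(x,P_x): x\in\Phi_i,\ P_x = P_i\}$ is a type~II MHCPP on $\mathbb{R}^2$ with hardcore distance $H_i$, marked by the constant transmit power $P_i$. Applying Lemma~\ref{lem:hardcore} to each such subprocess gives that $\widetilde{\Phi}_i$ is strongly $\left(P_i,\ \frac{2\pi P_i}{\sqrt{12}H_i},\ \frac{\pi P_i}{\sqrt{12}H_i^2}\right)$-ball regulated, i.e. $\sigma_i = P_i$, $\rho_i = \frac{2\pi P_i}{\sqrt{12}H_i}$, and $\nu_i = \frac{\pi P_i}{\sqrt{12}H_i^2}$.

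Next I would invoke Lemma~\ref{lem:superposition}, whose hypothesis requires that the constituent subprocesses be jointly stationary and each strongly ball regulated. Since the $\Phi_i$ are assumed independent (hence jointly stationary) and each $\widetilde{\Phi}_i$ has just been shown to be strongly ball regulated, the lemma applies directly and yields that the superposition $\widetilde{\Phi} = \bigcup_{i=1}^M \widetilde{\Phi}_i$ is strongly $({\sigma},{\rho},{\nu})$-ball regulated with parameters obtained by summing termwise:
\begin{equation}
{\sigma} = \sum_{i=1}^{M} \sigma_i = \sum_{i=1}^{M} P_i, \quad {\rho} = \sum_{i=1}^{M} \rho_i = \sum_{i=1}^{M} \frac{2\pi P_i}{\sqrt{12}H_i}, \quad {\nu} = \sum_{i=1}^{M} \nu_i = \sum_{i=1}^{M} \frac{\pi P_i}{\sqrt{12}H_i^2}.
\end{equation}
This matches exactly the values claimed in the statement, completing the argument.

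I do not anticipate a genuine mathematical obstacle here, since the result is a straightforward composition of two already-proven lemmas. The only point requiring mild care is verifying that the hypotheses of Lemma~\ref{lem:superposition} are actually met: one must confirm that independence of the $\Phi_i$ guarantees the required joint stationarity, and that each marked subprocess carries a \emph{constant} mark $P_i$ so that Lemma~\ref{lem:hardcore} (which is stated for a constant transmit power mark) applies verbatim rather than needing a generalization to random marks. Once these bookkeeping checks are in place, the proof reduces to substituting the per-class parameters into the superposition formula, and the statement follows.
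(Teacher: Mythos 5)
Your proposal is correct and follows exactly the paper's own argument: apply Lemma~\ref{lem:hardcore} to each marked MHCPP to get the per-class parameters, then invoke the superposition property of Lemma~\ref{lem:superposition} to sum them. The additional bookkeeping remarks (joint stationarity via independence, constant marks) are sensible but do not change the route.
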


\begin{proof}
From Lemma \ref{lem:hardcore}, we know that the stationary marked point process with all points construing a type II MHCPP \(\Phi_i\) with hardcore distance \(H_i\) and with transmit power mark \(P_i\) is strongly \(\left(P_i, \frac{2\pi P_i}{\sqrt{12}H_i}, \frac{\pi P_i}{\sqrt{12}H_i^2}\right)\)-ball regulated.

Applying Lemma \ref{lem:superposition}, we conclude that the superposed marked point process \(\widetilde{\Phi}\) is strongly \(\left({\sigma}, {\rho}, {\nu}\right)\)-ball regulated.
\end{proof}

Thus, utilizing Theorem \ref{thm:boundnofading_1tier}, this alternative approach also confirms the results presented in Theorem \ref{thm:boundnofading}.

From Lemma \ref{lem:superreg}, we derive that the total power allocated to all transmitters within \(b(o,r)\) can be expressed as
\begin{equation}\label{totalPower}
P_\mathrm{total}(o,r) \leq \sum_{i=1}^M P_{i} + \sum_{i=1}^M \frac{2\pi}{\sqrt{12}} \left(\frac{P_{i}}{H_i}\right) r + \sum_{i=1}^M \frac{\pi}{\sqrt{12}} \left(\frac{P_{i}}{H_i^2}\right) r^2.
\end{equation}

This inequality provides an upper limit on the total transmission power of the network within \(b(o,r)\).

\begin{theorem}
\label{thm:linksuccbound}
In a heterogeneous scenario modeled by \(M\) MHCPPs and considering Rayleigh fading, the link success probability \(P_{\mathrm{s}}(\theta_i \mid \widetilde{\Phi})\) for the \(i\)-th class satisfies
\begin{equation}
    P_\mathrm{s}(\theta_i \mid \widetilde{\Phi}) \geq \exp\Bigg(-\frac{\theta_i W}{P_i \ell(r_0)} - \sum_{j=1}^M A_{\tilde{\ell}_{ij}}\Bigg), \quad \mathbb{P}\text{-a.s.}, \label{eqn:Ps_bound1}
\end{equation}
where \(A_{\tilde{\ell}_{ij}}\) is defined as
\begin{equation}
    A_{\tilde{\ell}_{ij}} \triangleq \sigma_j \tilde{\ell}_{ij}(0) + \rho_j \int_0^\infty \tilde{\ell}_{ij}(r) \, {\rm d}r + 2\nu_j \int_0^\infty r \tilde{\ell}_{ij}(r) \, {\rm d}r, \label{eqn:Ali_ij}
\end{equation}
and \(\tilde{\ell}_{ij}(r)\) is given by
\begin{equation}
    \tilde{\ell}_{ij}(r) \triangleq \ln\left(1 + \frac{\theta_i P_j \ell(r)}{P_i \ell(r_0)}\right). \label{eqn:tilde_ell_ij}
\end{equation}
\end{theorem}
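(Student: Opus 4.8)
The plan is to follow the single-tier argument of Theorem~\ref{thm:boundnofading_1tier} up to the product-of-Laplace-transforms stage, and then to depart from it: rather than linearizing via $\ln(1+x)<x$, I would treat the per-class logarithmic kernels themselves as admissible shot-noise weight functions. Conditioning on the marked point process $\widetilde{\Phi}$ and using that the desired-signal fading $h_{x_0}\sim\mathrm{Exp}(1)$, the success event $\{\mathrm{SINR}_i>\theta_i\}$ becomes an exponential moment of interference-plus-noise; since the $\{h_x\}$ are i.i.d.\ unit-mean exponentials and independent across transmitters, the conditional expectation factorizes, giving
\begin{equation*}
P_\mathrm{s}(\theta_i\mid\widetilde{\Phi})=\exp\!\left(-\frac{\theta_i W}{P_i\ell(r_0)}\right)\prod_{x\in\Phi\setminus\{x_0\}}\frac{1}{1+\frac{\theta_i P_x\ell(\|x\|)}{P_i\ell(r_0)}},
\end{equation*}
exactly as in Theorem~\ref{thm:boundnofading_1tier}. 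Taking logarithms turns the product into the sum $\sum_{x\neq x_0}\ln\!\big(1+\frac{\theta_i P_x\ell(\|x\|)}{P_i\ell(r_0)}\big)$.

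Next I would split this sum according to link class. For $x\in\Phi_j$ one has $P_x=P_j$, so the summand is exactly the kernel $\tilde{\ell}_{ij}(\|x\|)$ of (\ref{eqn:tilde_ell_ij}). Since every term is non-negative, reinstating the excluded transmitter $x_0$ only enlarges the exponent, so
\begin{equation*}
-\ln P_\mathrm{s}(\theta_i\mid\widetilde{\Phi})\le\frac{\theta_i W}{P_i\ell(r_0)}+\sum_{j=1}^M\sum_{x\in\Phi_j}\tilde{\ell}_{ij}(\|x\|),\quad\mathbb{P}\text{-a.s.}
\end{equation*}
The crucial observation is that each $\tilde{\ell}_{ij}(r)=\ln\!\big(1+\frac{\theta_i P_j\ell(r)}{P_i\ell(r_0)}\big)$ inherits from $\ell$ the properties of being non-negative, bounded, non-increasing, and (via $\ln(1+x)\le x$) integrable against both $\mathrm{d}r$ and $r\,\mathrm{d}r$, so it is a legitimate shot-noise weight. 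By Lemma~\ref{lem:hardcore} each subprocess $\widetilde{\Phi}_j$ is strongly $(\sigma_j,\rho_j,\nu_j)$-ball regulated, and by Theorem~\ref{thm:equ} it is therefore strongly $(\sigma_j,\rho_j,\nu_j)$-shot-noise regulated. Applying the $R\to\infty$ form (\ref{eqn:R_inf_bound})--(\ref{eqn:Ali}) with weight $\tilde{\ell}_{ij}$ bounds the inner sum for class $j$ by $A_{\tilde{\ell}_{ij}}$ as in (\ref{eqn:Ali_ij}); summing over the $M$ classes and exponentiating yields (\ref{eqn:Ps_bound1}).

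I expect the main obstacle to be precisely this passage from the Laplace-transform product to a shot-noise bound \emph{without} linearizing the logarithm. This requires checking that $\tilde{\ell}_{ij}$ genuinely satisfies the hypotheses of the shot-noise regulation (non-negativity, boundedness, monotonicity, and finiteness of $A_{\tilde{\ell}_{ij}}$), and, more delicately, keeping the constant-power normalization of each class consistent so that the regulation constants $\sigma_j=P_j$, $\rho_j=\frac{2\pi P_j}{\sqrt{12}H_j}$, $\nu_j=\frac{\pi P_j}{\sqrt{12}H_j^2}$ are matched correctly to the weighted point sums. The remaining points are routine: the bound is almost sure because the shot-noise regulation holds $\mathbb{P}$-a.s.\ for every realization of the conditioning process while only the fading is averaged out, and the $M$ independent subprocesses may be regulated separately (Lemma~\ref{lem:superposition}) before being combined.
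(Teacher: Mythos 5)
Your proposal matches the paper's proof essentially step for step: the paper likewise factorizes the conditional Laplace transform over the i.i.d.\ Rayleigh fades, takes logarithms, identifies the per-class summand with the kernel $\tilde{\ell}_{ij}$, and then applies the shot-noise regulation (via Lemma~\ref{lem:hardcore} and the equivalence in Theorem~\ref{thm:equ}) directly to $\tilde{\ell}_{ij}$ rather than linearizing the logarithm, exactly as you anticipated. The one point you flag as delicate --- matching the marked regulation constants $\sigma_j=P_j$, etc.\ to the sum $\sum_{x}\tilde{\ell}_{ij}(\|x\|)$, which carries no explicit $P_j$ weight --- is handled no more carefully in the paper than in your sketch, so your treatment is on par with the published argument.
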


\begin{proof}
Considering the case with Rayleigh fading channels and different target SINR thresholds \(\theta_i > 0\) for each class of links, the link success probability for the \(i\)-th class can be expressed as
\begin{equation}
\begin{aligned}
P_{\mathrm{s}}(\theta_i \mid \widetilde{\Phi}) &\triangleq \mathbb{P}(\mathrm{SINR}_i > \theta_i \mid \widetilde{\Phi}) \\
&= \mathbb{P}\left[\frac{P_i h_{x_0} \ell(r_0)}{I + W} > \theta_i \Bigg| \widetilde{\Phi}\right],
\label{eqn:linkreliability}
\end{aligned}
\end{equation}
where \(\widetilde{\Phi} = \bigcup_{i=1}^M \widetilde{\Phi}_i\) represents the superposition of stationary marked point processes for all \(M\) classes of links.

The link success probability for the \(i\)-th class user in a large-scale wireless network can be calculated as follows.
\begin{equation}
\begin{aligned}
&P_\mathrm{s}(\theta_i \mid \widetilde{\Phi}) = \exp\left(-\frac{\theta_i W}{P_{i} \ell(r_0)}\right) \\
& \times \mathbb{E}\left[\exp\left(-\frac{\theta_i \sum_{j=1}^M \sum_{x \in \widetilde{\Phi}_{j} \setminus \{x_0\}} P_{j} h_{x} \ell(\|x\|)}{P_{i} \ell(r_0)}\right) \Bigg| \widetilde{\Phi}\right]. \\
\end{aligned}
\end{equation}

This can be further simplified to
\begin{equation*}
\begin{aligned}
&\quad P_\mathrm{s}(\theta_i \mid \widetilde{\Phi}) = \exp\left(-\frac{\theta_i W}{P_{i} \ell(r_0)}\right) \\
&\qquad\qquad\times\prod_{j=1}^M \prod_{(x,P_{x}) \in \widetilde{\Phi}_j  \setminus (x_0,P_{x_0})} \frac{1}{1 + \frac{\theta_i P_{j} \ell(\|x\|)}{P_{i} \ell(r_0)}} \\
&= \exp\Bigg(-\frac{\theta_i W}{P_{i} \ell(r_0)} - \sum_{j=1}^M \! \sum_{(x,P_{x}) \in \widetilde{\Phi}_j  \setminus (x_0,P_{x_0})} \!\!\!\!\!\!\!\!\!\!\!\!\!\!\!\!\!\ln\Big(1 + \frac{\theta_i P_{j} \ell(\|x\|)}{P_{i} \ell(r_0)}\Big)\Bigg) \\
&= \exp\Bigg(-\frac{\theta_i W}{P_{i} \ell(r_0)} - \sum_{j=1}^M \sum_{(x,P_{x}) \in \widetilde{\Phi}_j  \setminus (x_0,P_{x_0})} \!\!\!\!\!\!\!\!\!\!\!\!\!\!\!\!\!\tilde{\ell}_{ij}(r)\,\,\Bigg), \label{eqn:ps1}
\end{aligned}
\end{equation*}
where \(\tilde{\ell}_{ij}(r)\) is defined in (\ref{eqn:tilde_ell_ij}), which is non-increasing, monotonic, and bounded.

Since \(\widetilde{\Phi}_j\) is \((\sigma_j, \rho_j, \nu_j)\)-ball regulated with \(\sigma_j = P_j\), \(\rho_j = \frac{2\pi P_j}{\sqrt{12}H_j}\), and \(\nu_j = \frac{\pi P_j}{\sqrt{12}H_j^2}\), by the equivalence established in Theorem \ref{thm:equ}, \(\widetilde{\Phi}_j\) is also \((\sigma_j, \rho_j, \nu_j)\)-shot-noise regulated. 

Therefore, combining this with (\ref{eqn:R_inf_bound}) and (\ref{eqn:Ali}), we obtain the following bound:
\begin{equation}
\sum_{(x,P_{x}) \in \widetilde{\Phi}_j  \setminus (x_0,P_{x_0})} \!\!\!\!\!\!\!\!\!\!\!\!\!\!\!\tilde{\ell}_{ij}(r) \leq A_{\tilde{\ell}_{ij}}, \label{eqn:sum_bound}
\end{equation}
where \(A_{\tilde{\ell}_{ij}}\) is defined in (\ref{eqn:Ali_ij}).

Substituting (\ref{eqn:sum_bound}) into (\ref{eqn:ps1}), we derive the lower bound given in (\ref{eqn:Ps_bound1}). 
This derivation demonstrates that the SINR exceeds the threshold \(\theta_i\) with a probability constrained by noise and interference, thereby providing a lower bound on the link success probability for \(i\)-th class users by considering the combined effects of power control and shot-noise regulation.
\end{proof}

Theorem \ref{thm:linksuccbound} provides lower bounds on the link success probability for each of the \(M\) classes of links in the network. To derive an overall lower bound on the link success probability across all links, we define the minimum link success probability, denoted as \(P_\mathrm{s,min}\), as:
\begin{equation}
    P_\mathrm{s,min} \triangleq \min_{i \in [1, M]} P_\mathrm{s}(\theta_i \mid \widetilde{\Phi}).
\end{equation}

Using this definition, we present the following corollary, which establishes a lower bound on the link success probability across all links in the network.

\begin{corollary}
\label{cor:minlinksuccbound}
In a heterogeneous network scenario modeled by \(M\) MHCPPs, under Rayleigh fading conditions, the overall link success probability across all links satisfies the following inequality:
\begin{equation}
    P_\mathrm{s,min} \geq \min_{i \in [1, M]} \exp\Bigg(-\frac{\theta_i W}{P_i \ell(r_0)} - \sum_{j=1}^M A_{\tilde{\ell}_{ij}}\Bigg), \quad \mathbb{P}\text{-a.s.}, \label{eqn:Psmin_bound1}
\end{equation}
where \(A_{\tilde{\ell}_{ij}}\) is defined in (\ref{eqn:Ali_ij}).
\end{corollary}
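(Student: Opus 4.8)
The plan is to obtain the corollary as an immediate consequence of Theorem \ref{thm:linksuccbound}. The key observation is that Theorem \ref{thm:linksuccbound} already supplies a separate almost-sure lower bound for every class index \(i\), and the quantity \(P_\mathrm{s,min}\) is by definition the pointwise minimum of these \(M\) conditional success probabilities. Thus the entire task reduces to transferring the per-class lower bounds through the \(\min\) operation, with the only genuine bookkeeping being the treatment of the almost-sure qualifier.

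First I would invoke Theorem \ref{thm:linksuccbound} once for each \(i \in [1,M]\), recording the \(M\) inequalities
\begin{equation}
P_\mathrm{s}(\theta_i \mid \widetilde{\Phi}) \geq \exp\left(-\frac{\theta_i W}{P_i \ell(r_0)} - \sum_{j=1}^M A_{\tilde{\ell}_{ij}}\right), \quad \mathbb{P}\text{-a.s.} \notag
\end{equation}
Denote the right-hand side by \(L_i\); note that each \(L_i\) is a deterministic constant determined by \(\theta_i\), \(P_i\), \(r_0\), and the regulation parameters, whereas each \(P_\mathrm{s}(\theta_i \mid \widetilde{\Phi})\) is a random variable depending on the realization of \(\widetilde{\Phi}\).

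Next I would address the almost-sure statements. Each of the \(M\) inequalities holds on an event of probability one; since \(M\) is finite, their intersection \(E\) also satisfies \(\mathbb{P}(E) = 1\). On \(E\) all \(M\) bounds hold simultaneously, that is, \(P_\mathrm{s}(\theta_i \mid \widetilde{\Phi}) \geq L_i\) for every \(i\) at once. This finite-intersection step is the one place warranting care, as it upgrades the \(M\) separate probability-one statements to a single simultaneous one; after it, the remainder is purely a property of the \(\min\) function.

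Finally, on \(E\) I would apply the elementary monotonicity of the minimum over a finite index set: whenever \(a_i \geq b_i\) for all \(i\), the index \(i^\star\) attaining \(\min_i a_i\) gives \(\min_i a_i = a_{i^\star} \geq b_{i^\star} \geq \min_i b_i\), so \(\min_i a_i \geq \min_i b_i\). Taking \(a_i = P_\mathrm{s}(\theta_i \mid \widetilde{\Phi})\) and \(b_i = L_i\) yields \(P_\mathrm{s,min} \geq \min_{i \in [1,M]} L_i\) on \(E\), which is exactly (\ref{eqn:Psmin_bound1}); since \(\mathbb{P}(E) = 1\), the bound holds \(\mathbb{P}\)-almost surely. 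I expect no substantive obstacle: the result is a direct corollary, and the argument is complete once the simultaneous validity of the per-class bounds has been secured.
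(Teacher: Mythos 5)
Your proposal is correct and follows the same route as the paper, which simply applies the minimization over $i$ to both sides of the bound in Theorem \ref{thm:linksuccbound}; your additional care with the finite intersection of the probability-one events and the monotonicity of the minimum is a sound (and slightly more explicit) rendering of that same one-step argument.
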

\begin{proof}
    The result is directly derived from Theorem \ref{thm:linksuccbound}. By applying the minimization operation to both sides of inequality (\ref{eqn:Ps_bound1}) over all \(i \in [1, M]\), we obtain the desired bound.
\end{proof}

\begin{example}
Theorem \ref{thm:linksuccbound} is applicable to a general path loss function \(\ell(\cdot)\). As a specific example, when the path loss function is given by \(\ell(r) = \min\{1, r^{-\alpha}\}\), we can explicitly calculate \(A_{\tilde{\ell}_{ij}}\) as
\begin{equation}
\begin{aligned}
    &A_{\tilde{\ell}_{ij}}= \sigma_j \ln\left(1 + \frac{P_j \theta_i}{P_i\ell(r_0)}\right) \\
    &\quad + \rho_j \frac{\alpha P_j \theta_i}{(\alpha - 1) P_i \ell(r_0)} \, {}_2F_1\left(1 - \frac{1}{\alpha}, 1; 2 - \frac{1}{\alpha}; -\frac{P_j \theta_i}{P_i \ell(r_0)}\right) \\
    &\quad + \nu_j \frac{\alpha P_j \theta_i}{(\alpha - 2) P_i \ell(r_0)} \, {}_2F_1\left(1 - \frac{2}{\alpha}, 1; 2 - \frac{2}{\alpha}; -\frac{P_j \theta_i}{P_i \ell(r_0)}\right),
\end{aligned}
\label{eqn:A_ell_i_j}
\end{equation}
where \({}_2F_1(\cdot)\) denotes the Gauss hypergeometric function.

To arrive at this result, we substitute \(\ell(r) = \min\{1, r^{-\alpha}\}\) into the general expression of \(A_{\tilde{\ell}_{ij}}\) in (\ref{eqn:Ali_ij}), which yields:
\begin{equation}
\begin{aligned}
    &A_{\tilde{\ell}_{ij}} = \sigma_j \ln\left(1 + \frac{P_j \theta_i}{P_i\ell(r_0)}\right) 
    + \rho_j \ln \left(1 + \frac{P_j \theta_i}{P_i\ell(r_0)}\right) \\
    &\quad + \rho_j \int_{1}^{\infty} \ln \left(1 + \frac{P_j \theta_i r^{-\alpha}}{P_i\ell(r_0)}\right) \mathrm{d}r \\
    &\quad + \nu_j \ln \left(1 + \frac{P_j \theta_i}{P_i\ell(r_0)}\right) 
    + 2\nu_j \int_{1}^{\infty} r \ln \left(1 + \frac{P_j \theta_i r^{-\alpha}}{P_i\ell(r_0)}\right) \mathrm{d}r.
\end{aligned}
\end{equation}

The first integral term can be evaluated as
\begin{equation}
\begin{aligned}
    &\int_{1}^{\infty} \ln \left(1 + \frac{P_j \theta_i r^{-\alpha}}{P_i \ell(r_0)}\right) \mathrm{d}r 
    = -\ln \left(1 + \frac{P_j \theta_i}{P_i \ell(r_0)}\right) \\
    &\quad + \frac{\alpha P_j \theta_i}{(\alpha-1) P_i \ell(r_0)} {}_{2}F_{1}\left(1 - \frac{1}{\alpha}, 1; 2 - \frac{1}{\alpha}; -\frac{P_j \theta_i}{P_i \ell(r_0)}\right).
\end{aligned}
\end{equation}
    
Similarly, the second integral term evaluates to
\begin{equation}
\begin{aligned}
    &\int_{1}^{\infty} r \ln \left(1 + \frac{P_j \theta_i r^{-\alpha}}{P_i \ell(r_0)}\right) \mathrm{d}r 
    = -\frac{1}{2} \ln \left(1 + \frac{P_j \theta_i}{P_i \ell(r_0)}\right) \\
    &\quad + \frac{\alpha P_j \theta_i}{2(\alpha-2) P_i \ell(r_0)} {}_{2}F_{1}\left(1 - \frac{2}{\alpha}, 1; 2 - \frac{2}{\alpha}; -\frac{P_j \theta_i}{P_i \ell(r_0)}\right).
\end{aligned}
\end{equation}
    
Substituting these results back into the expression for \(A_{\tilde{\ell}_{ij}}\) completes the derivation of (\ref{eqn:A_ell_i_j}).
\end{example}

Theorem \ref{thm:linksuccbound} provides a deterministic lower bound for the link success probability in a large wireless network modeled by \(M\) MHCPPs, ensuring a minimum success probability for all links. 

However, when the SINR threshold for different tiers are the same, i.e. \(\theta_i=\theta\) for all \(i\), a simpler and looser lower bound can be derived with fewer steps, as shown in the following theorem.
\begin{corollary}
\label{cor:linksuccbound2}
In a heterogeneous network scenario with \(M\) MHCPPs and Rayleigh fading, a simplified lower bound for the link success probability \(P_{\mathrm{s}}(\theta_i \mid \widetilde{\Phi})\) of the \(i\)-th class is:
\begin{equation}
    P_{\mathrm{s}}(\theta_i \mid \widetilde{\Phi}) \geq \exp\left(-\frac{\theta_i (W + A_{\ell})}{P_{x_0} \ell(r_0)} + \theta_i\right), \quad \mathbb{P}\text{-a.s.},
\end{equation}
where \(A_{\ell}\) is defined in equation (\ref{eqn:Ali}), with \(\sigma = \sum_{i=1}^{M} P_i\), \(\rho = \sum_{i=1}^{M} \frac{2\pi P_i}{\sqrt{12}H_i}\), and \(\nu = \sum_{i=1}^{M} \frac{\pi P_i}{\sqrt{12}H_i^2}\).
\end{corollary}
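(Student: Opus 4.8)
The plan is to treat the entire superposition $\widetilde{\Phi} = \bigcup_{i=1}^M \widetilde{\Phi}_i$ as a \emph{single} regulated process, rather than accounting for each class separately as in Theorem \ref{thm:linksuccbound}. Lemma \ref{lem:superreg} already guarantees that $\widetilde{\Phi}$ is strongly $(\sigma, \rho, \nu)$-ball regulated with the aggregate parameters $\sigma = \sum_{i=1}^M P_i$, $\rho = \sum_{i=1}^M \frac{2\pi P_i}{\sqrt{12}H_i}$, and $\nu = \sum_{i=1}^M \frac{\pi P_i}{\sqrt{12}H_i^2}$; by the equivalence in Theorem \ref{thm:equ}, it is therefore also strongly $(\sigma, \rho, \nu)$-shot-noise regulated with these same parameters. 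The quickest route is then simply to invoke Theorem \ref{thm:boundnofading_1tier} directly on $\widetilde{\Phi}$ with $P_{x_0} = P_i$, since once the interferers are no longer distinguished by class the superposition is exactly the single shot-noise regulated process that that theorem addresses.

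To exhibit the bound explicitly, I would first write the class-$i$ link success probability in its product form, as in the proof of Theorem \ref{thm:linksuccbound}, but with the product running over \emph{all} interferers at once:
\begin{equation}
P_\mathrm{s}(\theta_i \mid \widetilde{\Phi}) = \exp\!\left(-\frac{\theta_i W}{P_i \ell(r_0)}\right) \prod_{(x,P_x) \in \widetilde{\Phi} \setminus (x_0, P_{x_0})} \frac{1}{1 + \frac{\theta_i P_x \ell(\|x\|)}{P_i \ell(r_0)}}.
\end{equation}
Rewriting the product as the exponential of a sum of logarithms and applying $\ln(1+u) < u$ for $u > 0$ linearizes each factor and pulls the constant $\theta_i/(P_i\ell(r_0))$ outside, leaving the superposition shot-noise $\sum_{x \in \Phi \setminus \{x_0\}} P_x \ell(\|x\|)$ inside the exponent. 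The decisive step is to bound this by the single aggregate constant $A_\ell$ of (\ref{eqn:Ali}): the $R \to \infty$ estimate (\ref{eqn:R_inf_bound}) gives $\sum_{x \in \Phi} P_x \ell(\|x\|) \leq A_\ell$, and removing the dedicated transmitter's own term $P_{x_0}\ell(r_0) = P_i \ell(r_0)$ yields the interference-only bound $A_\ell - P_i\ell(r_0)$. Substituting and simplifying collapses the exponent to $-\theta_i(W + A_\ell)/(P_{x_0}\ell(r_0)) + \theta_i$, which is the claimed inequality.

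I do not expect a genuine obstacle here; the result is a one-shot specialization of the earlier machinery, and the ``fewer steps'' promised in the preamble reflect that we never introduce the per-class weight functions $\tilde{\ell}_{ij}$. The only delicate points are bookkeeping: $A_\ell$ must be computed from the \emph{aggregate} $(\sigma,\rho,\nu)$ of the whole superposition rather than from any single class, and the self-contribution $P_i \ell(r_0)$ must be subtracted when passing from the full shot-noise to the interference sum. The bound is looser than Theorem \ref{thm:linksuccbound} precisely because $\ln(1+u) < u$ is applied \emph{before} the regulation is invoked, which sacrifices the finer per-class weighting carried by $\tilde{\ell}_{ij}$; this looseness is exactly what makes the uniform-threshold specialization clean.
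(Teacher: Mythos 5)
Your proposal is correct and follows essentially the same route as the paper's own proof: invoke Lemma~\ref{lem:superreg} to get aggregate $(\sigma,\rho,\nu)$-ball regulation of the superposition, pass to shot-noise regulation via Theorem~\ref{thm:equ}, and then apply Theorem~\ref{thm:boundnofading_1tier} to $\widetilde{\Phi}$ with $P_{x_0}=P_i$. The additional detail you supply (the product form, $\ln(1+u)<u$, and the subtraction of the self-term $P_i\ell(r_0)$ from $A_\ell$) simply unpacks the internals of Theorem~\ref{thm:boundnofading_1tier}, which the paper cites without repeating.
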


\begin{proof}
Since \(\widetilde{\Phi}\) is the superposition of \(M\) MHCPPs with transmit power marks, where each individual marked point process \(\widetilde{\Phi}_i\) is \(\left(P_i, \frac{2\pi P_i}{\sqrt{12}H_i}, \frac{\pi P_i}{\sqrt{12}H_i^2}\right)\)-ball regulated, we apply Lemma \ref{lem:superreg} to conclude that the superposed marked point process \(\widetilde{\Phi}\) is strongly \(\left({\sigma}, {\rho}, {\nu}\right)\)-ball regulated, where
\begin{equation}
{\sigma} = \sum_{i=1}^{M} P_i, \quad {\rho} = \sum_{i=1}^{M} \frac{2\pi P_i}{\sqrt{12}H_i}, \quad \text{and} \quad {\nu} = \sum_{i=1}^{M} \frac{\pi P_i}{\sqrt{12}H_i^2}.
\end{equation}

Using Theorem \ref{thm:equ}, we then conclude that the superposed marked point process \(\widetilde{\Phi}\) is strongly \(\left(\sigma, \rho, \nu\right)\)-shot-noise regulated. 

Finally, applying Theorem \ref{thm:boundnofading_1tier} yields the simplified lower bound for the link success probability as stated.
\end{proof}

\subsection{Requirement-based Power Control}
To ensure that the requirements of different link classes are met, we consider a power control scheme that allocates distinct transmit powers \(P_{i}\) to the transmitters of the \(i\)-th link class based on their respective SINR requirements \(\theta_i\).

This power control strategy is designed to meet the quality of service requirements of each user class by distributing power uniformly within the same class, while varying it across classes according to their SINR thresholds. Classes with higher SINR requirements are allocated more power to meet their stringent quality of service needs. This approach defines our requirement-based power control scheme.

\begin{definition}\label{power-reg}
($\beta$-Power Control).
For a given constant \(\beta > 0\), the power \(P\) allocated to a link with a target SINR threshold \(\theta\) is defined by:
\begin{equation}
    P = \ln(1 + \beta \theta).
\end{equation}

This allocation formula applies to all transmitters for any given SINR threshold \(\theta\).
\end{definition}

According to Definition \ref{power-reg}, the power allocated to the \(i\)-th class is determined by its designated SINR threshold \(\theta_i\). Specifically, the allocated power is given by:
\begin{equation}
    P_i = \ln(1 + \beta \theta_i),  \label{eqn:power_alloc}
\end{equation}
where \(\theta_i\) represents the SINR requirement necessary for users in the \(i\)-th class to achieve satisfactory service quality. This power control scheme is motivated by Shannon's theorem, which indicates that an increase in the SINR threshold corresponds to a logarithmic increase in the achievable data rate.

Therefore, the transmit power for each transmitter is calculated according to the SINR thresholds assigned to different user classes. Users with higher SINR requirements are allocated more power, while those with lower requirements receive less. The primary challenge lies in determining whether this power control scheme can improve performance lower bounds or guarantee enhanced performance for all links in the wireless network.

In the following discussion, we explore the impact of the $\beta$-power control strategy on network performance. We will examine how this strategy influences the distribution and utilization of power across different classes of users within the wireless network. Additionally, we aim to derive a lower bound on the link success probability for networks with multi-requirement users, demonstrating how strategic power control can enhance overall network success probability.

The total power consumption for all transmitters within the wireless network is given by
\begin{equation}
\begin{aligned}
    P_\mathrm{total}(o,r)\leq\sum_{i=1}^M\left(1+\frac{2\pi r}{\sqrt{12}H_i} +\frac{\pi r^2}{\sqrt{12}H_i^2} \right)\mathrm{ln}(1+\beta\theta_i).
\end{aligned}
\end{equation}

By substituting the transmit power (\ref{eqn:power_alloc}) into Theorem \ref{thm:linksuccbound}, we derive the following corollary, which provides the lower bound for the link success probability under the \(\beta\)-power control strategy.

\begin{corollary}
\label{cor:linksuccbound_powercont}
In a heterogeneous network scenario modeled by \(M\) MHCPPs and considering Rayleigh fading, the link success probability \(P_{\mathrm{s}}(\theta_i \mid \widetilde{\Phi})\) for the \(i\)-th class with the \(\beta\)-power control strategy is given by
\begin{equation}
    P_\mathrm{s}(\theta_i \mid \widetilde{\Phi}) \geq \exp\Bigg(-\frac{\theta_i W}{\ln(1 + \beta \theta_i) \ell(r_0)} - \sum_{j=1}^M A_{\tilde{\ell}_{ij}}\Bigg), ~ \mathbb{P}\text{-a.s.}, \label{eqn:Ps_bound1_powercont}
\end{equation}
where \(A_{\tilde{\ell}_{ij}}\) is defined in (\ref{eqn:Ali_ij}), and \(\tilde{\ell}_{ij}(r)\) is given by
\begin{equation}
    \tilde{\ell}_{ij}(r) \triangleq \ln\left(1 + \frac{\theta_i \ln(1 + \beta \theta_j) \ell(r)}{\ln(1 + \beta \theta_i) \ell(r_0)}\right). \label{eqn:tilde_ell_ij_powercont}
\end{equation}
\end{corollary}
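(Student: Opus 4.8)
The plan is to reduce the statement to a direct application of Theorem \ref{thm:linksuccbound}, since the $\beta$-power control scheme merely fixes a specific choice of the per-class transmit powers without altering the underlying regulatory structure. First I would observe that, by Definition \ref{power-reg} and (\ref{eqn:power_alloc}), every transmitter of the $i$-th class carries the mark $P_i = \ln(1+\beta\theta_i)$. Because $\beta > 0$ and each $\theta_i > 0$, each $P_i$ is a strictly positive constant, so the marked process $\widetilde{\Phi}_i$ is still a type II MHCPP equipped with a \emph{constant} transmit power mark. This is precisely the setting of Lemma \ref{lem:hardcore}: $\widetilde{\Phi}_i$ is strongly $\left(P_i, \frac{2\pi P_i}{\sqrt{12}H_i}, \frac{\pi P_i}{\sqrt{12}H_i^2}\right)$-ball regulated, and by Theorem \ref{thm:equ} it is equivalently shot-noise regulated with the same parameters.

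Having verified that all hypotheses of Theorem \ref{thm:linksuccbound} remain valid under the $\beta$-power assignment, the next step would be to invoke that theorem directly with the particular powers $P_i = \ln(1+\beta\theta_i)$ and $P_j = \ln(1+\beta\theta_j)$. Substituting these into the general bound
\[
P_\mathrm{s}(\theta_i \mid \widetilde{\Phi}) \geq \exp\left(-\frac{\theta_i W}{P_i \ell(r_0)} - \sum_{j=1}^M A_{\tilde{\ell}_{ij}}\right)
\]
replaces the factor $1/(P_i \ell(r_0))$ in the noise term by $1/(\ln(1+\beta\theta_i)\,\ell(r_0))$, yielding the first term in the exponent of (\ref{eqn:Ps_bound1_powercont}). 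Likewise, substituting into the definition (\ref{eqn:tilde_ell_ij}) of $\tilde{\ell}_{ij}$ turns the ratio $\theta_i P_j/(P_i\ell(r_0))$ into $\theta_i \ln(1+\beta\theta_j)/(\ln(1+\beta\theta_i)\,\ell(r_0))$, which is exactly (\ref{eqn:tilde_ell_ij_powercont}); the quantity $A_{\tilde{\ell}_{ij}}$ in (\ref{eqn:Ali_ij}) then inherits the corresponding $\sigma_j, \rho_j, \nu_j$ evaluated at $P_j = \ln(1+\beta\theta_j)$.

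There is no genuine analytical obstacle here, as the corollary is a specialization rather than a new estimate; the only point requiring care is confirming that the power-control map $\theta \mapsto \ln(1+\beta\theta)$ preserves the structural property on which Theorem \ref{thm:linksuccbound} rests, namely that each class retains a single constant mark so that Lemma \ref{lem:hardcore} stays applicable. Since $\ln(1+\beta\theta_i)$ depends only on the class index $i$ and not on the transmitter's location within that class, this is immediate, and the monotonicity, boundedness, and non-negativity of $\tilde{\ell}_{ij}$ needed for the bound (\ref{eqn:sum_bound}) carry over verbatim from the proof of Theorem \ref{thm:linksuccbound}. The stated inequality then follows.
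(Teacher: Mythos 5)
Your proposal is correct and follows essentially the same route as the paper, which obtains the corollary simply by substituting the $\beta$-power assignment $P_i = \ln(1+\beta\theta_i)$ from (\ref{eqn:power_alloc}) into Theorem \ref{thm:linksuccbound}. Your additional verification that each class still carries a single constant mark (so Lemma \ref{lem:hardcore} and the regulation hypotheses remain in force) is a sensible explicit check that the paper leaves implicit, but it does not change the argument.
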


\section{Numerical Results}
In this section, we delve into the performance evaluation of a wireless network serving two distinct classes of users, categorized according to their SINR thresholds: high-requirement and low-requirement. We perform simulations to assess the practical effectiveness of our theoretical models and to compare the simulated outcomes with the theoretically derived lower bounds of link success probability.
The simulation environment is configured with transmitters arranged in a triangular lattice within a circular region of radius $r = 100$, adhering to hardcore distances $H_1$ for high-requirement users and $H_2$ for low-requirement users. % {We choose the triangular lattice instead of MHCPP for simulation because MHCPP randomly removes nodes based on the hardcore distance, making it more difficult to simulate accurately the worst-case reliability. In contrast, the triangular lattice allows nodes to be densely packed within the network, better representing the worst-case interference scenarios.} 
{We choose the triangular lattice over MHCPP for simulations because it is more difficult to accurately simulate the worst-case reliability with MHCPP. In contrast, the triangular lattice allows nodes to be densely packed within the network, better representing the worst-case interference scenarios.} This configuration ensures that the distribution of transmitters corresponding to users of different requirements is mutually independent and unaffected by users of other requirements. However, the 
$i$-th class of links are still subject to interference from the transmitters corresponding to users of other classes. The stationary marked point process governing the locations of the transmitters $\Phi_i$ is strongly regulated by a $(1, 2\pi / (\sqrt{12}H_i), \pi / (\sqrt{12}H_i^2))$-ball regulation model, which provides a basis for calculating theoretical lower bounds and conducting rigorous simulations.
The simulation parameters include SINR thresholds $\theta_1$ and $\theta_2$ for high-requirement and low-requirement users, respectively, assuming $\theta_1>\theta_2>0$. Additional parameters are set as follows: power factor $\beta=1$, noise power $W=0$ and a fixed distance $r_0=1$ between users and their dedicated transmitters. The path loss function is defined as $\ell(r)={\rm min}\lbrace{1, r^{-\alpha}}\rbrace$. In subsequent analyses, we set the pass loss exponent, denoted as $\alpha$, to $4$. We define the reference threshold as $\theta$, setting $\theta_1$ equal to $\theta$ and $\theta_2$ to $\theta-3\rm dB$.
% \iffalse
% \begin{figure}[!htb]
%     \centering
%     \subfigure[$\alpha=3$]{
%         \begin{minipage}{0.5\textwidth}
%             \centering
%             \includegraphics[width=1\textwidth]{changet3.eps}
%         \end{minipage}
%     }
%     \subfigure[$\alpha=4$]{
%         \begin{minipage}{0.5\textwidth}
%             \centering
%             \includegraphics[width=1\textwidth]{changet4.eps}
%         \end{minipage}
%     }
%     \caption{Coverage Probability vs. the Changes of of Users' Thresholds}
%     \label{fig-pcov-t}
% \end{figure}
% \fi

\begin{figure}[htbp]
\centering
\includegraphics[width=0.5\textwidth]{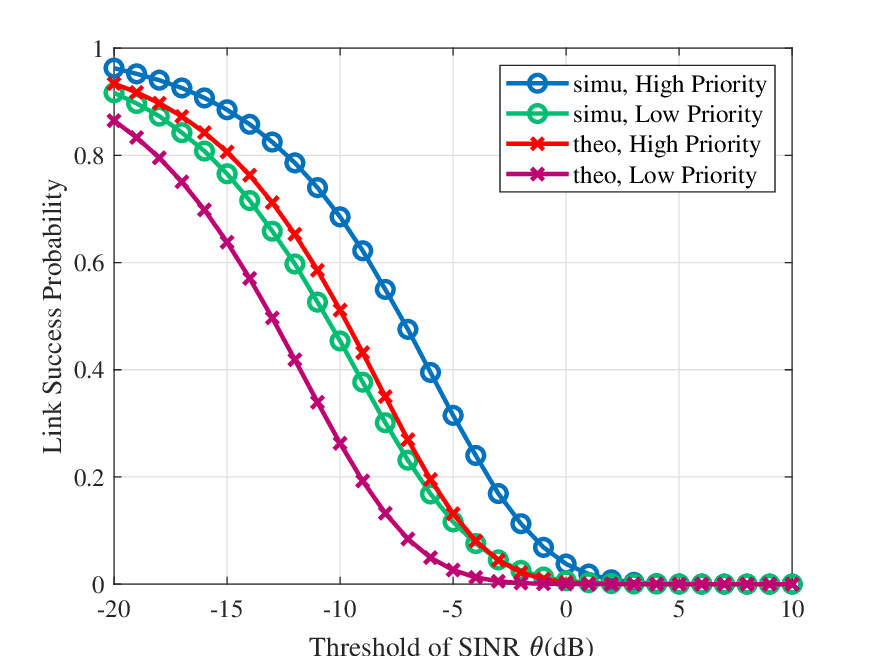}
\caption{Link success probability vs. the changes of the users' thresholds. }
\label{fig-changetheta}
\end{figure}

Fig. \ref{fig-changetheta} illustrates the variation in the lower bounds of link success probability for two classes of users as their SINR thresholds ($\theta_1$ and $\theta_2$) change. This figure includes both simulation results and theoretical calculations for high-requirement users and low-requirement users, respectively. For users with the same requirement, the vertical gap represents the differences in the link success probability lower bound between simulation results and theoretical calculations. This discrepancy was mainly influenced by the point process modeling the locations of transmitters and the path loss function. {Theorem \ref{thm:linksuccbound} provides a theoretical lower bound on the link success probability, while the achievability is beyond the scope of the paper.} The horizontal gap indicates the difference in SINR threshold required to achieve the desired link success probability. In particular, when the link success probability exceeds $0.8$, the horizontal deviations between the simulation results and the theoretical lower bounds are less than $3\,\mathrm{dB}$ for high-requirement users and $2\,\mathrm{dB}$ for low-requirement users. As the SINR threshold increases, the lower bound of link success probability decreases. This indicates that higher user demands make it increasingly difficult to meet all user requirements. Low-requirement users suffer from lower link success probability than high-requirement users because their dedicated transmitters are allocated less power, leading to a lower SINR, thus necessitating a lower threshold to satisfy their requirements.

\begin{figure}[htbp]
\centering
\includegraphics[width=0.5\textwidth]{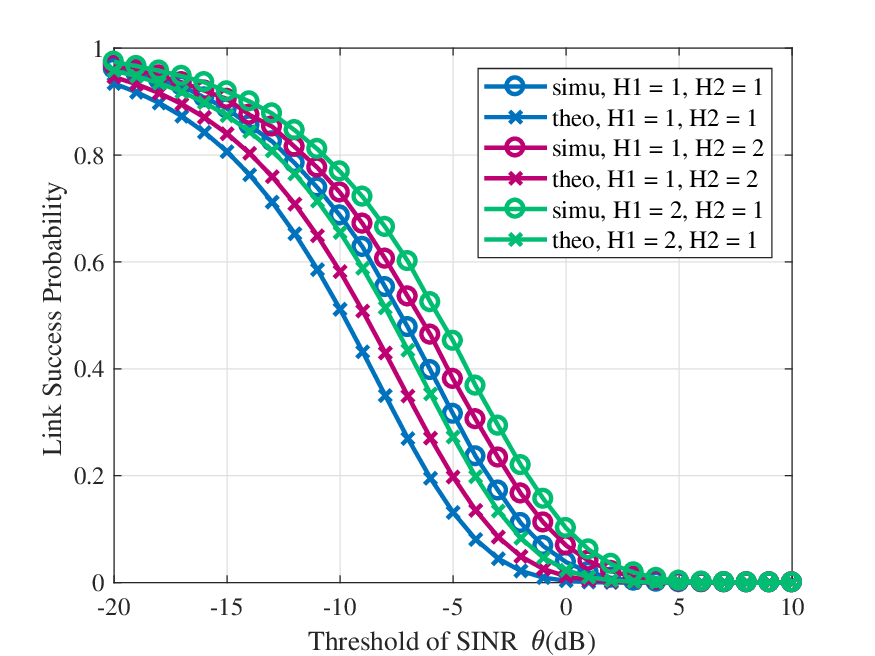}
\caption{Link success probability with different hardcore distances}
\label{fig-changedistance}
\end{figure}

Fig. \ref{fig-changedistance} shows the impact of different hardcore distances on the link success probability lower bound for high-requirement users. This analysis considers scenarios where transmitters for high-requirement and low-requirement users are positioned at distinct hardcore distances. Specifically, when the hardcore distances $H_1$ and $H_2$ are set to $1$, the link success probability lower bound reaches its minimum. This decrease is attributed to the increase in the density of transmitters, which intensifies interference within the wireless network. Furthermore, when the hardcore distances are set as $H_1 = 2$ and $H_2 = 1$, a higher link success probability lower bound is observed. This improvement can be attributed not only to the reduced density of transmitters regulated by the hardcore distance, but also to the power control strategy, which allocates less power to low-requirement transmitters, thereby reducing the overall level of interference within the wireless network.

\begin{figure}[htbp]
\centering
\includegraphics[width=0.5\textwidth]{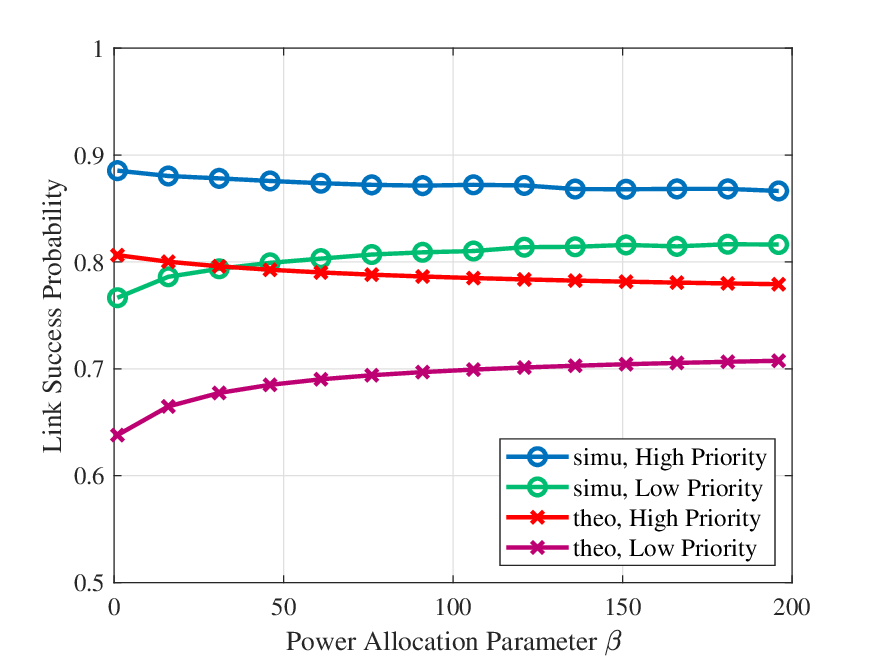}
\caption{Link success probability vs. the changes of \(\beta\)}
\label{fig-changebeta}
\end{figure}

{In Fig. \ref{fig-changebeta}, the impact of the parameter $\beta$ within the $\beta$-power control strategy on link success probability lower bound is illustrated, with $\beta$ varying from $1$ to $200$, $\theta$ set at $-15\,\mathrm{dB}$, and the hardcore distances $H_1$ and $H_2$ are set to $1$.} It shows a clear increase in the link success probability lower bound for low-requirement users as $\beta$ increases, which coincides with a decrease in link success probability lower bound for high-requirement users. Furthermore, there is no optimal $\beta$ value that achieves higher link success probability lower bound for both high-requirement and low-requirement users simultaneously. 

\begin{figure}[htbp]
\centering
\includegraphics[width=0.5\textwidth]{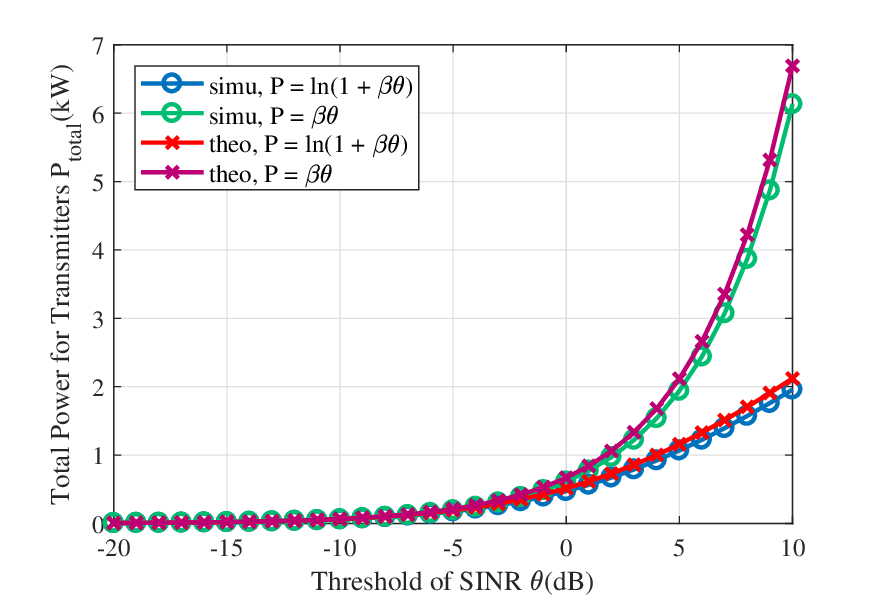}
\caption{Total Power for Transmitters vs. the Changes of \(\theta\)}
\label{fig-compTotalPower}
\end{figure}

\begin{figure}[htbp]
\centering
\includegraphics[width=0.5\textwidth]{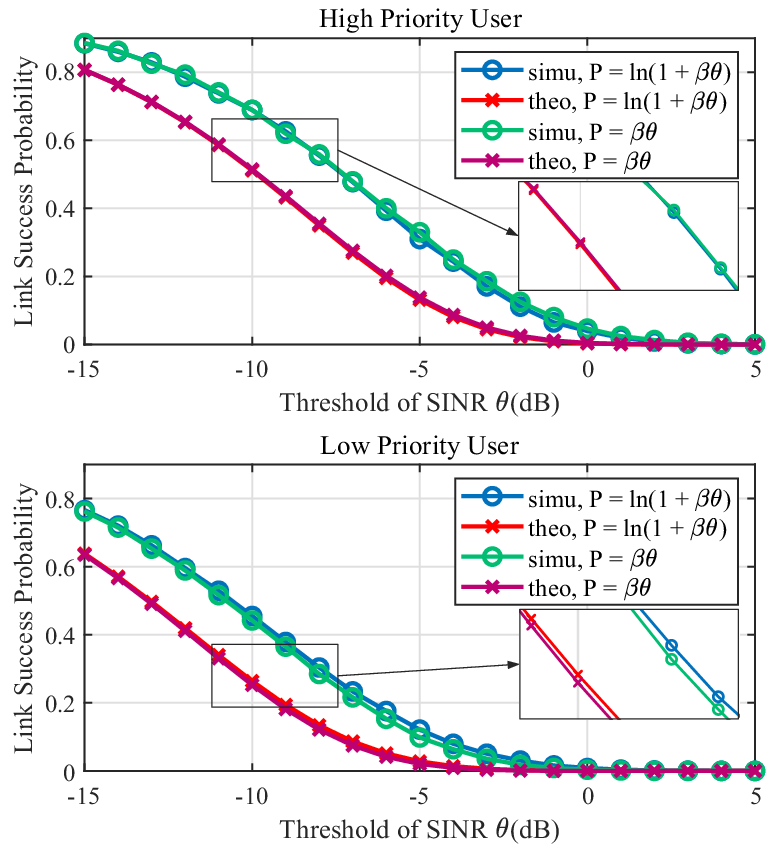}
\caption{Link success probability lower bound vs. the changes of \(\theta\)}
\label{fig-compReliability}
\end{figure}

Fig. \ref{fig-compTotalPower} examines the variations in total power consumption of the transmitters with changes in the SINR threshold $\theta$, comparing two power control strategies: $P=\ln{\left(1+\beta\theta\right)}$ as mentioned in Def. \ref{power-reg} and $P=\beta\theta$. The theoretical results provide the upper bound on $P_{\rm total}(o,r)$, as derived from (\ref{totalPower}). The proposed power control strategy consumes less total power under all SINR thresholds, and the power difference continues to increase as the SINR threshold increases. This implies that utilizing the $\beta$-power control strategy is more effective to reduce power consumption. 
Fig. \ref{fig-compReliability} presents a comparison of the link success probability lower bounds for high-requirement and low-requirement users under two different power control strategies, given equal total power for transmitters { and with \(\beta\) set to \(1\). To better highlight the differences, magnified views of the curves around the SINR threshold \(\theta = -10\,\mathrm{dB}\) are provided.} The figures indicate that using the allocation strategy $P=\ln{\left(1+\beta\theta\right)}$ slightly 
enhances the link success probability lower bound for low-requirement users, with no marked impact on high-requirement users.

\section{Conclusion}
This work introduces a novel regulatory framework to constrain total transmit power and aggregated interference in large-scale wireless networks. As an example, we utilize the Matérn II method to demonstrate how these regulations can be algorithmically implemented. Our work provides computable bounds on link performance suitable for various user requirements, establishing a robust analytical framework for evaluating network efficiency. Monte Carlo simulations validated our approach across different user classes, underscoring the practical relevance and efficacy of our regulatory strategies in real-world settings.

Looking ahead, our research will explore the integration of queuing theory in the time domain with spatial-temporal network calculus to enhance service quality across different requirement levels. {While the current framework uses spatial regulation to ensure deterministic performance, it does not consider interference dynamics. Future work will integrate spatial regulation with dynamic networks to better reflect real-world environments.} We also plan to assess the impacts of environmental factors such as shadowing, exposure to radio frequency electromagnetic fields, and strategies such as dynamic power control and load balancing on network performance, aiming to refine and extend the performance guarantees essential for future wireless network deployments. {Investigating the achievability of performance bounds for all links in arbitrarily large wireless networks remains an open and meaningful direction for future research.}

%%%%%%
%% Appendix:
%% If needed a single appendix is created by
%%
%\appendix
%%
%% If several appendices are needed, then the command
%%
% \appendices
%%
%% in combination with further \section commands can be used.
%%%%%%

%%%%%%
%% To balance the columns at the last page of the paper use this
%% command:
%%
%\enlargethispage{-1.2cm} 
%%
%% If the balancing should occur in the middle of the references, use
%% the following trigger:
%%
% \IEEEtriggeratref{4}
%%
%% which triggers a \newpage (i.e., new column) just before the given
%% reference number. Note that you need to adapt this if you modify
%% the paper.  The "triggered" command can be changed if desired:
%%
%\IEEEtriggercmd{\enlargethispage{-20cm}}
%%
%%%%%%

%%%%%%
%% References:
%% We recommend the usage of BibTeX:
%%
\bibliographystyle{IEEEtran}
\bibliography{bibfile.bib}
%%
%% where we here have assumed the existence of the files
%% definitions.bib and bibliofile.bib.
%% BibTeX documentation can be obtained at:
%% http://www.ctan.org/tex-archive/biblio/bibtex/contrib/doc/
%%%%%%

\end{document}